%
\documentclass[letterpaper,11pt]{article}

%

\usepackage[margin=1in]{geometry}  
\usepackage{xspace,exscale,relsize}
\usepackage{fancybox,shadow}
\usepackage{graphicx}
\usepackage{color}
\usepackage{amsthm,amsfonts}
\usepackage{amssymb}
\usepackage{authblk}
\usepackage[title]{appendix}

\usepackage{makecell}

\usepackage{microtype}

\usepackage[
            CJKbookmarks=true,
            bookmarksnumbered=true,
            bookmarksopen=true,
            colorlinks=true,
            citecolor=red,
            linkcolor=blue,
            anchorcolor=red,
            urlcolor=blue,
            pdfauthor={Polyanskiy and Wu}
            ]{hyperref}

\usepackage[noadjust]{cite}

\usepackage{amsmath}
	\makeatletter
	\let\over=\@@over \let\overwithdelims=\@@overwithdelims
	\let\atop=\@@atop \let\atopwithdelims=\@@atopwithdelims
  	\let\above=\@@above \let\abovewithdelims=\@@abovewithdelims
  	\makeatother
\interdisplaylinepenalty=10000

\usepackage{fixltx2e}

\usepackage{rotating}



%
\usepackage{ifpdf}

\usepackage{subfigure}
\usepackage{psfrag}

\usepackage{prettyref,enumerate}

\usepackage[all]{xy}

\usepackage{tikz}
\usetikzlibrary{arrows}
\tikzstyle{int}=[draw, fill=blue!20, minimum size=2em]
\tikzstyle{dot}=[circle, draw, fill=blue!20, minimum size=2em]
\tikzstyle{init} = [pin edge={to-,thin,black}]

\usepackage{tikz-cd}



\ifx\eqref\undefined
	\newcommand{\eqref}[1]{~(\ref{#1})}
\fi
\ifx\mod\undefined
	\def\mod{\mathop{\rm mod}}
\fi

\def\EE{\Expect}

\def\PP{\mathbb{P}}

\def\eqdef{\triangleq}

\def\perc{\mathrm{perc}}
\def\ER{\mathrm{ER}}

\newcommand{\ERG}{Erd\"os-R\'enyi\xspace}

\newcommand{\stepa}[1]{\overset{\rm (a)}{#1}}
\newcommand{\stepb}[1]{\overset{\rm (b)}{#1}}
\newcommand{\stepc}[1]{\overset{\rm (c)}{#1}}
\newcommand{\stepd}[1]{\overset{\rm (d)}{#1}}

\newcommand{\BSC}{\mathsf{BSC}}
\newcommand{\BEC}{\mathsf{BEC}}

\newcommand{\Binom}{\mathrm{Binom}}
\newcommand{\Unif}{\mathrm{Unif}}

\newcommand{\LC}{\mathrm{LC}}

\newcommand{\etaKL}{\eta_{\rm KL}}

\newcommand{\reals}{\mathbb{R}}

\newcommand{\integers}{\mathbb{Z}}
\newcommand{\Expect}{\mathbb{E}}
\newcommand{\expect}[1]{\mathbb{E}\left[#1\right]}
\newcommand{\eexpect}[1]{\mathbb{E}[#1]}

\newcommand{\prob}[1]{\mathbb{P}\left[#1\right]}
\newcommand{\probs}[2]{\mathbb{P}_{#2}\left[#1\right]}
\newcommand{\pprob}[1]{\mathbb{P}[#1]}

\newcommand{\TV}{d_{\rm TV}}

\newcommand{\iid}{i.i.d.\xspace}

\newcommand{\lnorm}[2]{\left\|{#1} \right\|_{{#2}}}

\newcommand{\Fnorm}[1]{\lnorm{#1}{\rm F}}
\newcommand{\pth}[1]{\left( #1 \right)}

\newcommand{\iiddistr}{{\stackrel{\text{\iid}}{\sim}}}
\def\simiid{\iiddistr}

\newcommand\indep{\protect\mathpalette{\protect\independenT}{\perp}}
\def\independenT#1#2{\mathrel{\rlap{$#1#2$}\mkern2mu{#1#2}}}
\newcommand{\Bern}{\text{Bern}}
\newcommand{\Poi}{\text{Poi}}

\newcommand{\Iprod}[2]{\langle #1, #2 \rangle}
\newcommand{\indc}[1]{{\mathbf{1}_{\left\{{#1}\right\}}}}

\definecolor{myblue}{rgb}{.8, .8, 1}
\definecolor{mathblue}{rgb}{0.2472, 0.24, 0.6} 
\definecolor{mathred}{rgb}{0.6, 0.24, 0.442893}
\definecolor{mathyellow}{rgb}{0.6, 0.547014, 0.24}

\newcommand{\sfS}{{\mathsf{S}}}

\newcommand{\calY}{{\mathcal{Y}}}

\newcommand{\diverge}{\to \infty}

\newrefformat{eq}{(\ref{#1})}
\newrefformat{thm}{Theorem~\ref{#1}}
\newrefformat{th}{Theorem~\ref{#1}}
\newrefformat{chap}{Chapter~\ref{#1}}
\newrefformat{sec}{Section~\ref{#1}}
\newrefformat{algo}{Algorithm~\ref{#1}}
\newrefformat{fig}{Fig.~\ref{#1}}
\newrefformat{tab}{Table~\ref{#1}}
\newrefformat{rmk}{Remark~\ref{#1}}
\newrefformat{clm}{Claim~\ref{#1}}
\newrefformat{def}{Definition~\ref{#1}}
\newrefformat{cor}{Corollary~\ref{#1}}
\newrefformat{lmm}{Lemma~\ref{#1}}
\newrefformat{prop}{Proposition~\ref{#1}}
\newrefformat{app}{Appendix~\ref{#1}}
\newrefformat{apx}{Appendix~\ref{#1}}
\newrefformat{ex}{Example~\ref{#1}}
\newrefformat{exer}{Exercise~\ref{#1}}
\newrefformat{soln}{Solution~\ref{#1}}


\def\unifto{\mathop{{\mskip 3mu plus 2mu minus 1mu%
	\setbox0=\hbox{$\mathchar"3221$}%
	\raise.6ex\copy0\kern-\wd0%
	\lower0.5ex\hbox{$\mathchar"3221$}}\mskip 3mu plus 2mu minus 1mu}}

\ifx\lesssim\undefined
\def\simleq{{{\mskip 3mu plus 2mu minus 1mu%
	\setbox0=\hbox{$\mathchar"013C$}%
	\raise.2ex\copy0\kern-\wd0%
	\lower0.9ex\hbox{$\mathchar"0218$}}\mskip 3mu plus 2mu minus 1mu}}
\else
\def\simleq{\lesssim}
\fi

\ifx\gtrsim\undefined
\def\simgeq{{{\mskip 3mu plus 2mu minus 1mu%
	\setbox0=\hbox{$\mathchar"013E$}%
	\raise.2ex\copy0\kern-\wd0%
	\lower0.9ex\hbox{$\mathchar"0218$}}\mskip 3mu plus 2mu minus 1mu}}
\else
\def\simgeq{\gtrsim}
\fi




\newtheorem{theorem}{Theorem}
\newtheorem{lemma}[theorem]{Lemma}
\newtheorem{corollary}[theorem]{Corollary}
\newtheorem{coro}[theorem]{Corollary}
\newtheorem{proposition}[theorem]{Proposition}

\theoremstyle{definition}

\newtheorem{remark}{Remark}

%
%
\newif\ifmapx
{\catcode`/=0 \catcode`\\=12/gdef/mkillslash\#1{#1}}
\edef\jobnametmp{\expandafter\string\csname dual2_apx\endcsname}
\edef\jobnameapx{\expandafter\mkillslash\jobnametmp}
\edef\jobnameexpand{\jobname}
\ifx\jobnameexpand\jobnameapx
\mapxtrue
\else
\mapxfalse
\fi

\long\def\apxonly#1{\ifmapx{\color{blue}#1}\fi}


\begin{document}
\ifpdf
\DeclareGraphicsExtensions{.pgf,.pdf}
\graphicspath{{figures/}{plots/}}
\fi

\title{Application of information-percolation method to reconstruction problems on graphs}

\author{Yury~Polyanskiy and Yihong Wu\thanks{Y.P.~is with the Department of EECS, MIT, Cambridge, MA, email: \url{yp@mit.edu}. Y.W.~is with
the Department of Statistics and Data Science, Yale University, New Haven, CT, email: \url{yihong.wu@yale.edu}.}}

\date{}

\maketitle

\begin{abstract}
In this paper we propose a method of proving impossibility results based on applying strong data-processing
inequalities to estimate mutual information between sets of variables forming certain Markov random fields. The end
result is that mutual information between two ``far away'' (as measured by the graph distance) variables is bounded by
the probability of the existence of an open path in a bond-percolation problem on the same graph. 
Furthermore, stronger bounds can be obtained by establishing mutual information comparison results with an erasure model on the same graph, with erasure
probabilities given by the contraction coefficients.

As applications, we show that our method gives sharp threshold for partially recovering a rank-one perturbation of a
random Gaussian matrix (spiked Wigner model), yields the best known upper bound on the noise level for group
synchronization (obtained concurrently by Abbe and Boix), and establishes new impossibility result for community detection on the stochastic block model with $k$ communities. 
\end{abstract}

\tableofcontents

\section{Introduction}

As a generalization of ideas of Evans-Schulman~\cite{evans1999signal}, a method for upper-bounding the mutual information between sets of
variables via the probability of the existence of a percolation path was proposed 
by the authors in~\cite[Theorem 5]{PW15-sdpi-tutorial}. This allows one to reuse results on critical threshold for percolation to
show the vanishing of mutual information. The original bound was stated for Bayesian networks (known as directed graphical
models). In this paper we show that similar results can be obtained for certain Markov random fields (\emph{undirected} graphical models) too, especially
those arising in statistical reconstruction problems on graphs such as community detection and group synchronization. 

Our original motivation was to improve the bound on the phase transition threshold in the $\mathbb{Z}_2$-synchronization
on a 2D square grid which appeared in the work of Abbe, Massouli\'e, Montanari, Sly and Srivastava~\cite{abbe2017group}. 
The possibility of such an improvement was anticipated by Abbe and Boix~\cite{Abbe2018-april}, who presented their
work in~\cite{AB18-gsync}, concurrently with the initial circulation of this work. The resulting improvement is stated
below as Corollary~\ref{cor:z2sync} and concides with the result in~\cite{AB18-gsync,abbe2018information}.

The paper is organized as follows. First, we present the idea in its simplest form (binary labels and binary symmetric
channels) in Section~\ref{sec:infoperc}. Second, we extend the method in two
different directions in Section~\ref{sec:gen_perc} to general (non-binary) labels and channels, and in
Section~\ref{sec:nonperc} to non-independent labels. To showcase our general results in Section~\ref{sec:est} 
we consider three applications (of which the first two are chosen following~\cite{AB18-gsync}):
group synchronization, spiked Wigner model and stochastic block model with $k$ blocks.
For the latter our results strengthen (in some regime) the best known impossibility results on correlated (partial)
recovery for $k=3$. 
One of the main technical tools is the strong data processing inequality for mutual information, which is surveyed in the \cite{PW15-sdpi-tutorial};  in \prettyref{app:eta} we provide a quick review emphasizing binary-input channels. We conclude with Appendix~\ref{sec:ab18} comparing our results with the work of Abbe and
Boix~\cite{abbe2018information}.

\section{Information--percolation bound (basic version)}
\label{sec:infoperc}

We start by recalling some basic notions from information theory; cf.~e.g.~\cite{cover}. The mutual information $I(X;Y)$ between random variables $X$ and $Y$ with joint law $P_{XY}$ is $I(X;Y) = D(P_{XY}\|P_X \otimes P_Y)$, where $D(P\|Q)$ is the Kullback-Leibler (KL) divergence between distributions $P$ and $Q$, defined as 
 $D(P\|Q)= \int dP \log \frac{dP}{dQ}$ if $P\ll Q$ and $\infty$ otherwise.
In addition, the $\chi^2$-divergence is defined as $\chi^2(P\|Q)= \int dP (\frac{dP}{dQ}-1)^2$ if $P\ll Q$ and $\infty$ otherwise, and the squared Hellinger distance is $H^2(P,Q)=\int (\sqrt{\frac{dP}{d\mu}}-\sqrt{\frac{dQ}{d\mu}})^2 d\mu$ for any $\mu$ such that $P\ll \mu$ and $Q\ll \mu$.
For discrete $X$, $I(X;Y)=H(X)-H(X|Y)$, the difference between the Shannon entropy of $X$ and the conditional entropy of $X$ given $Y$.

Two properties of mutual information are particularly useful for the present paper:
(a) Chain rule: $I(X;Y,Z) = I(X;Y) + I(X;Z|Y)$, where $I(X;Z|Y)$ is the conditional mutual information.
(b) \emph{Data processing inequality} (DPI): whenever $W\to X\to Y$ forms a Markov chain, we have 
$I(W;Y) \leq I(W;X)$. 
Furthermore, a quantitative version of the DPI is the \emph{strong data processing inequality} (SDPI), 
\begin{equation}\label{eq:sdpi}
I(W;Y) \leq \eta(P_{Y|X}) I(W;X)
\end{equation}
 where $\eta(P_{Y|X}) \in [0,1]$ is called the KL contraction coefficient of the channel. 
For example, if $P_{Y|X}$ is the binary symmetric channel (BSC) with flip probability $\delta$, denoted by $\BSC(\delta)$, that is, $Y= X+ Z \mod 2 \triangleq X\oplus Z$, where $Z \sim \Bern(\delta)$ is independent of $X$, we have $\eta(\BSC(\delta)) = (1-2\delta)^2$. 
For more on SDPI, we refer the reader to the survey \cite{PW15-sdpi-tutorial} and the references therein.

Let $\ER(G,p)$ denote the Erd\"os-R\'enyi random graph on the vertex set $V$, where each edge $e\in E$ is kept independently with probability $p$.
Abbreviate $\ER(K_n,p)$ as $\ER(n,p)$, where $K_n$ is the complete graph on $[n]$.

In this section we consider the following graphical model. Let  $G=(V,E)$ be a simple undirected graph with finite or countably-infinite $V$.
Let $\{X_v: v \in V\}$ be $\iiddistr \Bern(1/2)$ and  
Let $\{Z_e: e \in E\}$ be $\iiddistr \Bern(\delta)$.
For each $e=(u,v)\in E$, let $Y_e = X_u \oplus X_v \oplus Z_{e}$.
For any $S$, let $X_S=\{X_v: v\in S\}$.

\begin{theorem}
\label{thm:perc}
		For any subset $S\subset V$ and any vertex $v\in V$,
	\begin{equation}
I(X_v; X_{S}, Y_E) \leq \perc_G(v, S) \log 2,	
	\label{eq:perc}
	\end{equation}
	where $\perc_G(v,S) = \prob{\text{$v$ is connected to $S$ in $\ER(G,\eta)$}}$, with
	\[
	\eta \triangleq (1-2\delta)^2.
	\]
\end{theorem}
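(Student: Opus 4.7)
The plan is to compare the noisy-observation model to an idealized erasure model on the same graph, in which the erasure probability is tuned to the SDPI contraction coefficient of $\BSC(\delta)$. For each edge $e=(u,w)$ introduce an independent indicator $B_e \sim \Bern(\eta)$ with $\eta = (1-2\delta)^2$, and define the erasure observations
\[
\tilde Y_e = \begin{cases} X_u \oplus X_w, & B_e=1,\\ {?}, & B_e=0.\end{cases}
\]
The first step is to verify the identity $I(X_v; X_S, \tilde Y_E) = \perc_G(v,S)\log 2$. Since $B_E$ is a measurable function of $\tilde Y_E$, we may freely condition on $B_E$; and conditional on $B_E$, flipping all $X_u$ within any connected component of the open subgraph $\{e: B_e = 1\}$ disjoint from $S$ preserves the joint law of $(X, X_S, \tilde Y_E)$. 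Hence $X_v$ is determined from $(X_S, \tilde Y_E)$ exactly on the event $\{v \leftrightarrow S\}$ in the open subgraph, and is independent of them otherwise; averaging the indicator yields $\perc_G(v,S)\log 2$.

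The heart of the argument is then the comparison
\[
I(X_v; X_S, Y_E) \;\leq\; I(X_v; X_S, \tilde Y_E).
\]
Expanding both sides with the chain rule along a fixed ordering of edges yields, for each $e=(a,b)$, a term $I(X_v; Y_e \mid X_S, Y_{E<e})$ on the left and a term that (after averaging over $B_e$) equals $\eta \cdot I(X_v; X_a \oplus X_b \mid X_S, \tilde Y_{E<e})$ on the right. The SDPI~\eqref{eq:sdpi} for $\BSC(\delta)$ bounds the left-hand term by $\eta \cdot I(X_v; X_a \oplus X_b \mid X_S, Y_{E<e})$, with the right prefactor $\eta$ but with the ``wrong'' conditioning sigma-algebra.

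The main obstacle is reconciling the two conditionings $\sigma(Y_{E<e})$ and $\sigma(\tilde Y_{E<e})$ so that the per-edge contractions compose into the \emph{exact} percolation probability rather than the loose union bound $\perc_G(v,S) \leq \sum_P \eta^{|P|}$ that a naive Evans--Schulman-type path argument would produce. To handle this I would work on a coupled probability space carrying $B$, $Y_E$ and $\tilde Y_E$ simultaneously---for instance by realizing $B_e$ as a deterministic function of $Z_e$ together with an independent auxiliary bit, so that on $\{B_e = 1\}$ one has $Y_e = \tilde Y_e \oplus Z'_e$ for an independent noise $Z'_e$, and on $\{B_e = 0\}$ the variable $Y_e$ can be shown to carry no further information about $X_v$ given $(X_S, \tilde Y_E)$ by the same component-flip symmetry used in the first paragraph. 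The resulting coupled SDPI should then telescope along open percolation paths and match the erasure-model identity, completing the proof. I expect the coupling/symmetry step in this last paragraph to be the technical heart of the argument.
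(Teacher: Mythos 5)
Your first step---the identity $I(X_v; X_S, \tilde Y_E) = \perc_G(v,S)\log 2$ for the erasure model---is correct, and so is the target comparison $I(X_v; X_S, Y_E) \le I(X_v; X_S, \tilde Y_E)$ (it is exactly Remark~\ref{rem:perc_erasure}, i.e.\ a special case of Theorem~\ref{thm:compare}). But the coupling you propose to prove that comparison does not exist. You are implicitly asking for a physical degradation of the erasure channel to $\BSC(\delta)$: on $\{B_e=1\}$ you want $Y_e=\tilde Y_e \oplus Z'_e$, and on $\{B_e=0\}$ you need $Y_e$ to be independent of $X_V$ (otherwise the component-flip symmetry does not preserve the joint law, since $Y_e$ is correlated with $X_a\oplus X_b$). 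Write $\Pr[Z'_e=1]=p'$; then $\eta p' + (1-\eta)\tfrac12 = \delta$, which forces $\eta p' = \delta - 2\delta(1-\delta) = \delta(2\delta-1) < 0$ for $\delta<\tfrac12$. So no such coupling exists. The correct relationship between $\BSC(\delta)$ and the $\BEC$ with survival probability $(1-2\delta)^2$ is ``less noisy,'' not ``degraded''; less noisy compares mutual informations abstractly and does not furnish a simulation of one channel from the other. The chain-rule/conditioning obstacle you flag in your second paragraph is real, and the coupling does not overcome it.

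The paper's proof avoids all of this. It inducts directly on $|E|$ and never passes through the erasure model: at each step it picks an edge $e=(u,z)$ with one endpoint $z\in S$, so that conditionally on $(X_S,Y_{E'})$ the chain $X_v\to X_u\to Y_e$ holds (because $X_z$ is known and $Y_e$ depends only on $X_u\oplus X_z$). A single application of SDPI then gives $I(X_v;Y_e\mid X_S,Y_{E'})\le\eta\,I(X_v;X_u\mid X_S,Y_{E'})$; adding $I(X_v;X_S,Y_{E'})$ to both sides produces the convex combination $\eta\,I(X_v;X_S\cup\{u\},Y_{E'}) + (1-\eta)\,I(X_v;X_S,Y_{E'})$, which matches the one-step recursion for $\perc_G(v,S)$ exactly. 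The insistence that the removed edge touch $S$ is what keeps the conditioning $\sigma$-algebras aligned with the induction hypothesis---precisely the issue your route cannot resolve. If you want to pursue the erasure-comparison route, the right tool is the fact that less-noisy tensorizes (cited in the paper to \cite[Prop.\ 14]{PW15-sdpi-tutorial}), not a coupling.
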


\begin{remark} Notice that right-hand side of~\eqref{eq:perc} can be seen as $I(X_v; X_S, \tilde Y_E)$ where for $e=(u,v)$, $\tilde
Y_e$  is a random variable equal to $X_u\oplus X_v$ with probability $\eta$ and $*$ (erasure) otherwise.
This is not accidental -- it can be shown via~\cite[Prop. 15, 16]{PW15-sdpi-tutorial} that observations over the
erasure channel $\BEC(\eta)$ lead to strictly larger mutual informations: $I(X_{S_1}; Y_E | X_{S_2}) \le
I(X_{S_1}; \tilde Y_E | X_{S_2})$, regardless of the joint distribution $P_{X_V}$. This generalization is pursued in
Section~\ref{sec:nonperc}. \label{rem:perc_erasure}
\end{remark}
 
\begin{proof}
	By the monotone convergence property of mutual information (and probability), it suffices to consider finite graph $G$.

	Let $\bar X_V = \{X_v \oplus1: v\in V\}$. The symmetry of the problem shows that
	$$ (X_V, Y_E) \stackrel{d}{=} (\bar X_V, Y_E)\,.$$
	In particular, we have
	\begin{equation}\label{eq:fpp1}
			I(X_v; Y_E)=0 
	\end{equation}		
	for any $v$.
	
	Fix $V$ and $v\in V$. We induct on the number of edges $|E|$. For the base case of $E = \emptyset$, by the independence of $\{X_v\}$, we have
\[
I(X_v;X_S) =  \indc{v \in S} \log 2 = \perc_G(v,S) \log 2.
\]

	Next suppose \prettyref{eq:perc} holds for all $G' = (V,E')$ with $|E'| < |E|$ and all $S$, i.e.
\begin{equation}
I(X_{z}; X_{S}, Y_{E'}) \leq \perc_{G'}(z, S) \log 2.
\label{eq:induct}
\end{equation}
We now show \prettyref{eq:perc} holds for $E$. Fix $S$. 
Suppose there is no edge in $E$ incident to any vertex in $S$. Then 
both sides of \prettyref{eq:perc} are zero by~\eqref{eq:fpp1}.
Otherwise, there exists an edge $e=(u,z)\in E$ incident to some vertex $z \in S$. 
Set 
$E'=E\setminus e$ and $G'=(V,E')$. 

Next we apply the strong data processing inequality (SDPI) for BSC (see~\cite{PW15-sdpi-tutorial} for a survey on
SDPIs): Note that $Y_e= X_u+X_z+Z_{e}$, where $Z_e\iiddistr \Bern(\delta)$ and independent of $X_V$. Since $z\in S$, conditioned on $(X_S,Y_{E'})$, we have the Markov chain:
$X_v \to X_u \to Y_e$. 
Therefore
\[
I(X_v; Y_e| X_S,Y_{E'}) \leq \eta I(X_v; X_u| X_S,Y_{E'}) .
\]
Adding $I(X_v;X_S,Y_{E'})$ to both sides gives
\[
I(X_{v}; X_S,Y_E) 
\leq 
\eta I(X_{v}; X_S, X_u, Y_{E'}) +
\bar\eta I(X_{v}; X_S, Y_{E'}).
\]
Applying the induction hypothesis \prettyref{eq:induct} to the RHS of the above display, we have:  
\[
I(X_{v}; X_S, Y_E) \leq (\underbrace{\eta \cdot \perc_{G'}(v, S\cup \{u\}) + \bar\eta \perc_{G'}(v, S)}_{=\perc(G)(v,S)} )\cdot \log 2
\]
\end{proof}

\subsection{Simple example of tightness of the bound}\label{sec:simp_perc}

Let $G$ be a complete infinite $d$-ary tree rooted at node $\rho$. Let $S_k$ denote the set of all nodes at depth $k$. Then, by results on 
broadcasting on trees~\cite{EKPS00}, it is easy to see that\footnote{
Indeed, due to the fact that $X_v$'s are iid $\Bern(\frac{1}{2})$, we have $I(X_\rho; Y_E, X_{S_k}) = I(X_\rho; Z_{S_k})$, where for each $u\in S_k$, 
$Z_u = X_u+\sum_{e \in \text{ path from $\rho$ to $u$}} Y_e$. This is precisely the setting of broadcasting on trees, where the label at each node is obtained by passing that of its parent through $\BSC(\delta)$ independently. It was found in \cite{EKPS00} that the cutoff of the total variation $\TV(P_{Z_{S_k}|X_\rho=+}, P_{Z_{S_k}|X_\rho=-})$
 (and equivalently, the mutual information $I(X_\rho; Z_{S_k})$) happens at the threshold $(1-2\delta)^2 d \le 1$.} 
\begin{equation}\label{eq:bcast_trees}
	\lim_{k\to \infty} I(X_\rho; Y_E, X_{S_k}) = \begin{cases} 0, &(1-2\delta)^2 d \le 1\\
						>0, & (1-2\delta)^2 d > 1 
						\end{cases}
\end{equation}						
The bound in Theorem~\ref{thm:perc} is tight in this case in the sense that the right-hand side of~\eqref{eq:perc}
converges to zero if and only if the branching process with offspring distribution $\Binom(d,\eta)$ (with $\eta=(1-2\delta)^2$) is ultimately extinct almost surely, which occurs when $(1-2\delta)^2d \le 1$ by standard results in branching process \cite{AthreyaNey}.

\apxonly{\textbf{TODO:} Need to verify that sums $X_u+\sum_e Y_e$ over all
paths from root to a node $u\in S_k$ are suff. stat. here!}

\section{General version: information percolation}\label{sec:gen_perc}

Consider a bipartite graph $G=(V,W,E)$ with parts $V,W$ and edges $E$, with finite or countably-infinite $V,W,E$. For
any subset $W' \subset W$ we will denote $G[W']$ the induced subgraph on vertices $V\cup W'$.

Let $\{X_v: v \in V\}$ be a collection of independent discrete random variables.
Let $\{Y_w: w\in W\}$ be a collection of random variables conditionally independent given $X_V$ and distributed each as
	\begin{equation}
	Y_w \sim P_{Y_w| X_{N(w)}} \qquad \forall w \in W\,,
	\label{eq:mrf}
	\end{equation}
	where $N(w) \subset V$ denote the neighborhood of $w$ in the bipartite graph $G$.
Let $\eta_w \eqdef \eta_{KL}(P_{Y_w|X_{N(w)}})$ be the SDPI constant corresponding to this
channel~\cite{PW15-sdpi-tutorial}. 

Let $\tilde{G}$ denote the subgraph $G[\tilde W]$ induced by the random subset $\tilde W$, where each vertex $w \in W$ is included in $\tilde W$ independently with
probability $\eta_w$. For a pair of sets $S_1,S_2 \subset V$ we define the average number of vertices in $S_1$ that are
connected to $S_2$:
	$$ \perc_G(S_1,S_2) \eqdef \sum_{v \in S_1} \PP[v \mbox{ is connected to $S_2$ in~} \tilde{G}]\,.$$
We note the following simple identity: if $w$ is such that $N(w) \cap S_2 \neq \emptyset$ then
\begin{equation}\label{eq:gpp1}
	\perc_G(S_1,S_2) = \eta_w \perc_{G[W\setminus w]}(S_1,S_2\cup N(w)) + (1-\eta_w) \perc_{G[W\setminus
	w]}(S_1,S_2)\,.
\end{equation}

To recover the setting of the previous section, where the graph was simple, we can consider the bipartite graph which is the incidence graph
between vertices and edges (in this case the degree of every $w\in W$ is two).

\begin{theorem}
\label{thm:gperc}
	For any subsets $S_1, S_2$ of $V$, we have
	\begin{equation}
I(X_{S_1}; X_{S_2}| Y_W) \leq \perc_G(S_1, S_2) \cdot \sup_{v \in V} H(X_v)\,.
	\label{eq:gperc}
	\end{equation}
\end{theorem}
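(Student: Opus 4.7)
The plan is to mirror the inductive argument for Theorem~\ref{thm:perc}, now inducting on $|W|$; by monotone convergence (as at the start of that proof) it suffices to treat finite $W$. The degenerate case, which also handles the base $W=\emptyset$, is when no $w\in W$ satisfies $N(w)\cap S_2\neq\emptyset$: then $Y_W$ is a function of $X_{V\setminus S_2}$ alone and hence independent of $X_{S_2}$, so splitting $S_1=(S_1\cap S_2)\sqcup(S_1\setminus S_2)$ and using independence of the $X_v$'s yields $I(X_{S_1};X_{S_2}\mid Y_W)=H(X_{S_1\cap S_2})\le|S_1\cap S_2|\sup_v H(X_v)$. This matches $\perc_G(S_1,S_2)\sup_v H(X_v)$ since any $\tilde G$-path from $v\in S_1$ to $S_2$ must terminate at some $w\in\tilde W$ adjacent to $S_2$, leaving only the trivial contribution from $v\in S_1\cap S_2$.

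For the inductive step I would pick $w\in W$ with $N(w)\cap S_2\neq\emptyset$ and set $Z\eqdef(X_{S_2},Y_{W\setminus w})$. Expanding $I(X_{S_1};X_{S_2},Y_w\mid Y_{W\setminus w})$ by the chain rule in two ways and discarding the nonnegative remainder $I(X_{S_1};Y_w\mid Y_{W\setminus w})$ yields
\[
I(X_{S_1};X_{S_2}\mid Y_W)\le I(X_{S_1};X_{S_2}\mid Y_{W\setminus w}) + I(X_{S_1};Y_w\mid Z).
\]
Because $Y_w$ is conditionally independent of everything else given $X_{N(w)}$ by~\eqref{eq:mrf}, the chain $X_{S_1}\to X_{N(w)\setminus S_2}\to Y_w$ is Markov given $Z$, and the conditional SDPI gives $I(X_{S_1};Y_w\mid Z)\le \eta_w\,I(X_{S_1};X_{N(w)\setminus S_2}\mid Z)$. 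Applying the chain rule once more to rewrite the right-hand side as $I(X_{S_1};X_{S_2\cup N(w)}\mid Y_{W\setminus w})-I(X_{S_1};X_{S_2}\mid Y_{W\setminus w})$ then produces the convex combination
\[
I(X_{S_1};X_{S_2}\mid Y_W)\le (1-\eta_w)\,I(X_{S_1};X_{S_2}\mid Y_{W\setminus w}) + \eta_w\,I(X_{S_1};X_{S_2\cup N(w)}\mid Y_{W\setminus w}),
\]
and the inductive hypothesis on $G[W\setminus w]$ together with the percolation identity~\eqref{eq:gpp1} closes the induction.

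The hard part is the conditional SDPI with ``partial'' input $X_{N(w)\setminus S_2}$: once $X_{N(w)\cap S_2}$ is absorbed into $Z$, the residual channel $x\mapsto P_{Y_w\mid X_{N(w)\setminus S_2}=x,\,X_{N(w)\cap S_2}}$ is $P_{Y_w\mid X_{N(w)}}$ with some input coordinates frozen, and I need its KL-contraction to remain bounded by $\eta_w$. This should follow directly from the variational definition of $\eta_{KL}$ as a supremum of $D(P_Y\|Q_Y)/D(P_X\|Q_X)$ over all input distributions on $X_{N(w)}$, since restricting to distributions placing a point mass on the frozen coordinates only shrinks the feasible set; averaging the resulting pointwise SDPI over $Z$ then delivers the conditional inequality used above.
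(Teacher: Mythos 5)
Your proof is correct and follows essentially the same route as the paper's: induction on $|W|$, the same two-case split, removing a single $w$ with $N(w)\cap S_2\neq\emptyset$, applying the conditional SDPI to obtain the convex combination $(1-\eta_w)I(X_{S_1};X_{S_2}\mid Y_{W\setminus w}) + \eta_w I(X_{S_1};X_{S_2\cup N(w)}\mid Y_{W\setminus w})$, and closing via~\eqref{eq:gpp1}. The only cosmetic differences are that you work with conditional mutual informations throughout rather than expanding to $I(X_{S_1};X_{S_2},Y_W)$ and subtracting $I(X_{S_1};Y_W)$ at the end, and that you apply the SDPI to the coordinate-frozen channel on $X_{N(w)\setminus S_2}$ (justifying $\eta\le\eta_w$ via restriction of the variational supremum) whereas the paper applies it to the full $X_{N(w)}\to Y_w$ channel and uses $I(X_{S_1};X_{N(w)}\mid Z)=I(X_{S_1};X_{N(w)\setminus S_2}\mid Z)$ implicitly; these are equivalent.
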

\begin{remark} Note that $I(X_{S_1}; Y_W)=0$ does not hold even in the setting of the previous section, unless $S_1$ is a
singleton (see~\eqref{eq:fpp1}). Indeed, one may consider the graph $a-b-c$ in the context of \prettyref{thm:perc}. For $S_1=\{a,c\}$, $I(X_{a,c};Y_{ab,bc})\geq I(X_a+X_c; Y_{ab}+Y_{bc}) \geq 1 - h(2\delta(1-\delta))$.
 Thus $I(X_{S_1}; X_{S_2}, Y_W) \neq I(X_{S_1};
X_{S_2}|Y_W)$ and the former does not satisfy the inequality in \prettyref{thm:gperc}.
\end{remark}
\begin{proof} Again, because of the identity
	$$ I(X_{S_1}; X_{S_2} | Y_W) = I(X_{S_1}; X_{S_2}, Y_W) - I(X_{S_1}; Y_W) $$
and the continuity of mutual information and percolation probability we may only consider finite $S_1,S_2,W$.

We will prove \prettyref{eq:gperc} by induction on $|W|$.
Assume that 
$$ H(X_v) \le H_1 \qquad \forall v \in V\,.$$
First, suppose that $W=\emptyset$. We have then:
$$ I(X_{S_1}; X_{S_2}) = \sum_{i\in S_1\cap S_2} H(X_i) \le |S_1 \cap S_2| H_1 = \perc_{G[W]}(S_1,S_2) H_1\,.$$
Next, suppose that we have shown~\eqref{eq:gperc} for all $G[W']$ with $|W'| < |W|$. 
Consider two cases:

Case 1. There does not exist $w \in W$ such that $N(w) \cap S_2 \neq \emptyset$. Then, we have
$$ I(X_{S_1}; X_{S_2} | Y_W) \le I(X_{S_1}, Y_W; X_{S_2}) \le I(X_{S_1}, X_{S_0}; X_{S_2}) \le |S_1\cap S_2| H_1\,,$$
where $S_0 = \cup_{w\in W} N(w)$ and the last equality is due to $S_0 \cap S_2 = \emptyset$. Similarly, we have
$$ \perc_{G}(S_1, S_2) = |S_1 \cap S_2| $$
and~\eqref{eq:gperc} is established.

Case 2. There exists $w \in W$ such that $N(w) \cap S_2 \neq \emptyset$. Let $W' = W\setminus w$. Then we have
\begin{align}
	I(X_{S_1}; X_{S_2}, Y_{W'}, Y_w) &=  I(X_{S_1}; X_{S_2}, Y_{W'}) + I(X_{S_1}; Y_w | X_{S_2}, Y_{W'}) \nonumber \\
				&\le I(X_{S_1}; X_{S_2}, Y_{W'}) + \eta_w I(X_{S_1}; X_{N(w)} | X_{S_2}, Y_{W'}) \nonumber \\
				& = (1-\eta_w) I(X_{S_1}; X_{S_2}, Y_{W'}) + \eta_w I(X_{S_1}; X_{N(w) \cup S_2},
				Y_{W'})\,, \nonumber \\
				& = (1-\eta_w) I(X_{S_1}; X_{S_2}| Y_{W'}) + \eta_w I(X_{S_1}; X_{N(w) \cup S_2}|
				Y_{W'}) + I(X_{S_1}; Y_{W'})\label{eq:gpp2}
\end{align}				
where the inequality is an application of the SDPI, which is justified since given $X_{S_2}, Y_{W'}$ we still have the
Markov chain: $X_{S_1} \to X_{N(w)} \to Y_w$, in view of the definition \prettyref{eq:mrf}.

Subtracting $I(X_{S_1}; Y_{W})$ from both sides of~\eqref{eq:gpp2} we get
\begin{align} I(X_{S_1}; X_{S_2}| Y_{W}) &\le (1-\eta_w) I(X_{S_1}; X_{S_2}| Y_{W'}) + \eta_w I(X_{S_1}; X_{N(w) \cup S_2}|
				Y_{W'}) + I(X_{S_1}; Y_{W'}) - I(X_{S_1}; Y_W) \\
				&\le (1-\eta_w) I(X_{S_1}; X_{S_2}| Y_{W'}) + \eta_w I(X_{S_1}; X_{N(w) \cup S_2}|
				Y_{W'})\,,
\end{align}	
since $I(X_{S_1}; Y_{W'}) \le I(X_{S_1}; Y_W)$ by the monotonicity of the mutual information. From the induction hypothesis
and~\eqref{eq:gpp1} we conclude the proof of~\eqref{eq:gperc}.
\end{proof}

\section{General version: channel comparison}\label{sec:nonperc}

In the setting of Section~\ref{sec:infoperc}, we have imposed the condition~\eqref{eq:fpp1} which implies
$$ I(X_v; X_S, Y_E) = I(X_v; X_S|Y_E) = I(X_v; Y_E|X_S)\,.$$
Consequently, Theorem~\ref{thm:perc} (giving a bound on the first quantity) and Theorem~\ref{thm:gperc} (giving a bound
on the second one) are equivalent when \eqref{eq:fpp1} holds. In fact, Theorem~\ref{thm:gperc} holds in wider
generality. Can we also bound the third quantity? It turns out the answer is yes, and in fact this
generalization allows one to remove the most restrictive condition of Theorem~\ref{thm:gperc} -- the independence of $X_v$'s.
(However, the two theorems bound different quantities.) 
	To focus ideas, we recommend revisiting Remark~\ref{rem:perc_erasure}.

We proceed to describing the setting of the forthcoming more general result. 
Consider a bipartite graph $G=(V,W,E)$ with parts $V,W$ and edges $E$, with finite or countably-infinite $V,W,E$. For
any subset $W' \subset W$, we again denote by $G[W']$ the induced subgraph on vertices $V\cup W'$.

Let $\{X_v: v \in V\}$ be a collection of discrete random variables (not necessarily independent). 
Let $\{Y_w: w\in W\}$ and $\{\tilde Y_w: w\in W\}$ be two collection of random variables each 
conditionally independent given $X_V$ and distributed as
	\begin{align} Y_w &\sim P_{Y_w| X_{N(w)}} \qquad \forall w \in W\,, \\
	  \tilde Y_w &\sim Q_{Y_w| X_{N(w)}} \qquad \forall w \in W
\end{align}	
	where $N(w) \subset V$ denote the neighborhood of $w$ in the bipartite graph.

	We also recall the definition of the less noisy relation: stochastic matrix $Q_{\tilde Y|X}$ is less noisy than $P_{Y|X}$
	if for every distribution $P_{U,X}$ we have
		$$ I(U;Y) \le I(U; \tilde Y) $$
	where mutual informations are computed under the joint distribution
		$$P_{U,X,\tilde Y, Y}(u,x,\tilde y, y) = P_{U,X}(u,x) Q_{\tilde Y|X}(\tilde y|x) P_{Y|X}(y|x)\,. $$
	See~\cite[Theorem 2]{vanDijk97},~\cite[Prop. 14]{PW15-sdpi-tutorial} and~\cite[Theorem 2, Prop.
	8]{MP16-comp-arxiv} for various characterizations of the less noisy relation.

\begin{theorem}\label{thm:compare} 
Assume that for every $w\in W$, the channel $Q_{\tilde Y_w|X_{N(w)}}$ is less noisy than $P_{Y_w |
X_{N(w)}}$. 
Then for any subsets $S_1, S_2 \subset V$, we have
	\begin{equation}\label{eq:compare}
		I(X_{S_1}; Y_E | X_{S_2}) \le I(X_{S_1}; \tilde Y_E | X_{S_2})\,.
\end{equation}	
\end{theorem}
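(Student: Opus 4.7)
The strategy is a hybrid argument in which the observations are swapped one at a time from $Y$ to $\tilde Y$, showing that each single swap can only increase the conditional mutual information in \eqref{eq:compare}. By monotone convergence we may assume $W$ is finite. Since both sides of \eqref{eq:compare} depend only on the marginal laws of $Y_W$ and $\tilde Y_W$ separately, we may moreover enlarge the probability space so that the entire family $\{Y_w,\tilde Y_w: w\in W\}$ is conditionally independent given $X_V$, with $Y_w \sim P_{Y_w|X_{N(w)}}$ and $\tilde Y_w \sim Q_{\tilde Y_w|X_{N(w)}}$ as prescribed; this coupling changes neither side of the inequality.

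Enumerate $W=\{w_1,\ldots,w_n\}$ and for $k=0,\ldots,n$ set
\[ Z^{(k)} \eqdef (\tilde Y_{w_1},\ldots,\tilde Y_{w_k},Y_{w_{k+1}},\ldots,Y_{w_n}), \]
so that $Z^{(0)} = Y_W$ and $Z^{(n)} = \tilde Y_W$. Writing $w=w_{k+1}$ and letting $Z_{-}$ denote the coordinates common to $Z^{(k)}$ and $Z^{(k+1)}$, the chain rule for mutual information gives
\[ I(X_{S_1};Z^{(k+1)}\mid X_{S_2}) - I(X_{S_1};Z^{(k)}\mid X_{S_2}) = I(X_{S_1};\tilde Y_w\mid X_{S_2},Z_{-}) - I(X_{S_1};Y_w\mid X_{S_2},Z_{-}). \]
Telescoping over $k$, the theorem reduces to the single-swap inequality
\begin{equation}\label{eq:planswap}
 I(X_{S_1};Y_w\mid X_{S_2},Z_{-}) \le I(X_{S_1};\tilde Y_w\mid X_{S_2},Z_{-}).
\end{equation}

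To prove \eqref{eq:planswap} I would fix a realization $(x_{S_2},z)$ of $(X_{S_2},Z_{-})$. The coupling arranged above guarantees that under $\PP[\,\cdot\mid X_{S_2}=x_{S_2},\,Z_{-}=z]$ both Markov structures $X_{S_1}\to X_{N(w)}\to Y_w$ and $X_{S_1}\to X_{N(w)}\to \tilde Y_w$ hold with the \emph{original} channel kernels $P_{Y_w|X_{N(w)}}$ and $Q_{\tilde Y_w|X_{N(w)}}$. The less-noisy hypothesis, applied with $U=X_{S_1}$ and $X=X_{N(w)}$ under this conditional law, then yields \eqref{eq:planswap} pointwise in $(x_{S_2},z)$; averaging finishes the step. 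The main obstacle is precisely the verification of this last point: one must check that conditioning on the \emph{mixed} vector $Z_{-}$, which contains both $Y$- and $\tilde Y$-observations, does not distort the conditional kernel of $Y_w$ given $X_{N(w)}$ (and analogously for $\tilde Y_w$). This is exactly what the full cross-collection conditional-independence coupling secures, and the argument would break down without it.
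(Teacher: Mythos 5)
Your proof is correct and offers a more self-contained variant of the paper's. The paper's own proof is a one-liner: it observes the Markov chain $X_{S_1}\to X_V\to Y_E$ conditionally on $X_{S_2}$, invokes the tensorization of the less-noisy relation from \cite[Prop.~14]{PW15-sdpi-tutorial} to assert that the product channel $X_V\to\tilde Y_E$ is less noisy than $X_V\to Y_E$, and then applies the less-noisy ordering conditionally on $X_{S_2}$. Your hybrid argument re-derives that tensorization in the present setting by swapping one observation at a time; the single-swap step works because the cross-collection coupling makes $(Y_w,\tilde Y_w)$ conditionally independent of everything else given $X_{N(w)}$, so conditioning on $(X_{S_2},Z_-)$ preserves both local kernels $P_{Y_w|X_{N(w)}}$ and $Q_{\tilde Y_w|X_{N(w)}}$, and the less-noisy hypothesis can be applied pointwise. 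In short, the paper delegates the work to a cited tensorization result, while your argument unpacks exactly why tensorization holds here (preservation of the local channel kernels under conditioning on the mixed history). Both are valid; the paper's is shorter, yours is more transparent and spells out the mechanism that the tensorization proposition encapsulates.
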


\begin{remark}
\label{rmk:beccompare}
 The connection between Theorems~\ref{thm:compare} and \ref{thm:gperc} arises from~\cite[Proposition 15]{PW15-sdpi-tutorial}: the SDPI constant of the 
channel $P_{Y|X}$ satisfies $\etaKL(P_{Y|X}) \le 1-\delta$ if and only if $P_{Y|X}$ is more noisy than the erasure channel $Q_{\tilde Y|X}$
which outputs $\tilde Y=X$ with probability $1-\delta$ and $\tilde Y=*$ (erasure) otherwise. 
\end{remark}
\begin{remark} One cannot replace the less noisy condition with ``more capable'', a weaker notion (see~\cite{KM75comparison}). Indeed, it is known that erasure channel with probability of erasure $1-h(\delta)$ is 
more capable than $\BSC(\delta)$. But then 
consider the example in Section~\ref{sec:simp_perc}. If the more capable variation
of Theorem~\ref{thm:compare} were true, we would be able to reduce the probability of an open bond from $(1-2\delta)^2$ to
$1-h(\delta)$ and thus contradict~\eqref{eq:bcast_trees}.
\end{remark}

\begin{proof} Conditioning on $X_{S_2}$ we get a Markov chain $X_{S_1} \to X_V \to Y_E $. By~\cite[Prop.
14]{PW15-sdpi-tutorial}, the less noisy relation tensorizes. That is, the channel $X_V \to \tilde Y_E$ is less noisy than
$X_V \to Y_E$. Consequently, we get~\eqref{eq:compare}.
\end{proof}

\section{Applications to reconstruction problems}
	\label{sec:est}
	
	In this section we apply the information-percolation bound to various reconstruction problems on graphs, specifically, 
	$\integers_2$-synchronization, spiked Wigner model, community detection on stochastic block model (SBM) with two, and more than two communities. 
	The first three were considered earlier in \cite{AB18-gsync}, while the fourth application is new and apparently
	not obtainable via methods of~\cite{AB18-gsync,abbe2018information}.

\subsection{Group synchronization over $\mathbb{Z}/2\mathbb{Z}$}

The problem of group synchronization refers to the following: 
Given a graph $G=(V,E)$, let $X_V=\{X_v\}_{v\in V}$ be a collection of independent random variables that are uniformly distributed on some compact group. The goal 
is to recover $X_V$ (up to a global group action which is not identifiable) from pairwise measurements $Y_E=\{Y_{uv}\}_{(u,v)\in E}$, where $Y_{uv}$ is a noisy observation of $X_u^{-1}X_v$.
The paradigm of group synchronization arises in a various applications such as localization, imaging and computer vision (cf.~the references in \cite{abbe2017group}).

The synchronization problem over the $d$-dimensional grid was studied in \cite{abbe2017group} for various groups, focusing on \emph{correlated recovery}, i.e., achieving a reconstruction error that is strictly better than random guessing.
The simplest problem is for the group $\mathbb{Z}/2\mathbb{Z}$, commonly known as $\mathbb{Z}_2$-synchronization, which precisely corresponds to the setting of Section~\ref{sec:infoperc}.
If the observation channel is $\BSC(\delta)$, it is shown in \cite{abbe2017group} that 
correlated recovery is impossible if $1-2\delta \leq \frac{1}{2}$.
Next, we apply the information-percolation method in \prettyref{thm:perc} to improve the threshold
$(1-2\delta)^2\leq \frac{1}{2}$; this result was first announced and proved independently in \cite{abbe2017group}.
To prove the impossibility of the correlated recovery of $X_V$, it suffices to show that for any pair of vertices $u\neq v$, it is impossible to reconstruct the bit $T_{uv} = X_{v} \oplus X_{u}$ better than chance.

\begin{corollary} 
\label{cor:fano}
For any two (possibly non-adjacent) vertices $u,v\in V$, any estimator 
 $\hat T_{uv}=\hat T_{uv}(Y_E)$ satisfies
\begin{equation}\label{eq:fpp2}
		\PP[\hat T_{uv} \neq T_{uv}] 
		\ge {1\over 2} - \sqrt{{1 \over 2\log e} I(X_u;X_v,Y_E)}
		\ge {1\over 2} - \sqrt{{\log 2\over 2\log e} \perc_G(v, u)}
\end{equation}	
Consequently,
\begin{equation}\label{eq:fpp3}
	\frac{1}{|V|^2}\sum_{u,v \in V}	\PP[\hat T_{uv} \neq T_{uv}] 
		\ge {1\over 2} - o(1)
\end{equation}	
provided
\[ 
\sum_{u,v \in V}	 I(X_u;X_v,Y_E) = o(|V|^2)
\quad
\text{or }
\sum_{u,v \in V}	\perc_G(v, u) = o(|V|^2).
\]
\end{corollary}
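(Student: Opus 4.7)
The plan is to reduce the error probability for estimating $T_{uv}$ from $Y_E$ to a binary hypothesis testing problem, and then translate the mutual information bound of Theorem~\ref{thm:perc} into a total variation bound via Pinsker's inequality. Since $T_{uv} = X_u \oplus X_v \sim \Bern(1/2)$ is uniform, by the Le~Cam inequality for testing with a uniform prior,
\[
\PP[\hat T_{uv} \neq T_{uv}] \ge \frac{1}{2}\bigl(1 - \TV(P_{Y_E \mid T_{uv}=0}, P_{Y_E \mid T_{uv}=1})\bigr),
\]
and decomposing $I(T_{uv}; Y_E) = \tfrac{1}{2}[D(P_0\|\bar P) + D(P_1\|\bar P)]$ with $\bar P$ the marginal, then applying Pinsker in each term, yields $\TV^2 \le 2 I(T_{uv}; Y_E)/\log e$. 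Substituting this into the Le~Cam bound gives the first inequality of \eqref{eq:fpp2} with $I(T_{uv};Y_E)$ in place of $I(X_u;X_v,Y_E)$.

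To replace $I(T_{uv}; Y_E)$ by $I(X_u; X_v, Y_E)$, I would use the chain rule together with the independence $X_u \indep X_v$: since conditioning on $X_v$ makes $T_{uv}$ and $X_u$ bijectively related,
\[
I(T_{uv}; Y_E) \le I(T_{uv}; Y_E, X_v) = I(T_{uv}; Y_E \mid X_v) = I(X_u; Y_E \mid X_v) = I(X_u; X_v, Y_E),
\]
where the first equality uses $T_{uv} \indep X_v$ and the last uses $X_u \indep X_v$. Applying Theorem~\ref{thm:perc} with $S=\{v\}$ and distinguished vertex $u$ then gives $I(X_u; X_v, Y_E) \le \perc_G(u,v) \log 2$, and $\perc_G(u,v)=\perc_G(v,u)$ by symmetry of the undirected graph. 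Chaining these inequalities establishes both bounds in \eqref{eq:fpp2}.

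For the averaged statement \eqref{eq:fpp3}, the idea is to average the per-pair bound over $u,v \in V$ and push the square root outside by concavity. Specifically, Jensen's inequality applied to $\sqrt{\,\cdot\,}$ gives
\[
\frac{1}{|V|^2}\sum_{u,v} \sqrt{I(X_u;X_v,Y_E)} \le \sqrt{\frac{1}{|V|^2}\sum_{u,v} I(X_u;X_v,Y_E)},
\]
and similarly with $\perc_G(v,u)$ replacing $I(X_u;X_v,Y_E)$. Under either of the stated hypotheses the right-hand side is $o(1)$, producing the claimed $\tfrac{1}{2}-o(1)$ lower bound on the average error.

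The argument is essentially routine once Theorem~\ref{thm:perc} is available; the only points requiring care are tracking the base of the logarithm in the Pinsker constant (the $1/\log e$ factor) and correctly invoking the bijection between $X_u$ and $T_{uv}$ conditional on $X_v$, which relies on the marginal independence of $X_u$ and $X_v$.
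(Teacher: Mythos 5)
Your proof is correct and follows essentially the same strategy as the paper's: reduce to bounding $I(T_{uv};Y_E)$ by $I(X_u;X_v,Y_E)$, invoke Theorem~\ref{thm:perc}, convert mutual information to error probability via a Pinsker-type bound, and finish the averaged statement by Jensen. The only differences are cosmetic: you route the chain rule through $I(T;Y_E)\le I(T;X_v,Y_E)=I(T;Y_E\mid X_v)$ using $T\indep X_v$, whereas the paper writes $I(T;Y_E)\le I(X_u,X_v;Y_E)$ and invokes $I(X_v;Y_E)=0$ from \eqref{eq:fpp1}; and you deploy Le~Cam plus a symmetrized Pinsker bound on $\TV(P_{Y_E\mid T=0},P_{Y_E\mid T=1})$, whereas the paper applies the binary-divergence data-processing inequality directly to $p=\PP[\hat T=T]$ versus $q=\QQ[\hat T=T]$. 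Both routes yield the identical constant $\sqrt{I/(2\log e)}$, so the end result is the same.
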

\begin{remark}
It is clear, from Theorem~\ref{thm:gperc}, that the result above extends to arbitrary channels $P_{Y_e|X_u, X_v}$ for $e=(u,v)$, arbitrary
function $T=T(X_u, X_v)$ and arbitrary (discrete)
$X_v$. The only general requirement we need to impose the validity of~\eqref{eq:fpp1}. The only change is that the first term ${1\over 2}$ in the right-hand side of~\eqref{eq:fpp2} should be replaced
with $1-\max_s \PP[T(X_u, X_v)  = s]$ and $\log 2$ in the denominator inside the square root with $\max_v H(X_v)$. We
put this corollary first, as it originally motivated the writing of this article.
\end{remark}

\begin{proof}
It suffices to show \prettyref{eq:fpp2} as the rest follows from Jensen's inequality.
Next abbreviate $T_{uv}$ as $T$.
Note that
\begin{align*}
I(T;Y_E)
\stepa{\leq} & ~ I(X_u,X_v;Y_E) = I(X_u;Y_E|X_v) + I(X_v;Y_E) \\
\stepb{=} & ~ I(X_u;Y_E|X_v)	\\
\stepc{=} &~ I(X_u;X_v,Y_E)\\
\stepd{\leq} &~ \perc_G(v, u) \log 2,
\end{align*}
where (a) is the data processing inequality for mutual information;
(b) follows from \prettyref{eq:fpp1};
(c) follows from the assumption that $X_u\indep X_v$;
(d) follows from \prettyref{thm:perc}.

On the other hand,  for any estimator $\hat{T}=\hat{T}(Y_E)$, 
let $p=\pprob{\hat{T}=T}$ and $q=\mathbb{Q}[\hat{T}=T]$, where $\mathbb{Q}$ denote the probability measure where $Y_E$ and $T$ are independent. Thus $q \leq P_{\max}(T) \triangleq \max_t \prob{T=t}$. By the data processing inequality and the Pinsker inequality, we have
\[
I(T;Y_E) \geq d(p\|q) \geq 2 \log e (p-q)^2.
\]
Thus,
\[
\pprob{\hat{T}=T} \leq P_{\max}(T) + \sqrt{\frac{\perc_G(v, u) \log 2}{2 \log e}}.
\]
\end{proof}

Using Kesten's result on 2D-square grid percolation \cite{Kesten80}, we get:
\begin{corollary}\label{cor:z2sync} Let $G$ be an infinite $2D$-grid and suppose the goal is to estimate $T_n = X_{0,0}\oplus X_{n,n}$ for
large $n$ given observations of all (infinitely many) edges $Y_e$. If
	$$ (1-2\delta)^2 \le {1\over 2} $$
then for any estimator $\hat T_n = \hat T_n(Y_E)$ we have $\PP[\hat T_n \neq T_n] \to {1\over 2}$.
\end{corollary}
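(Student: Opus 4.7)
The plan is to combine Corollary~\ref{cor:fano} with Kesten's theorem on two-dimensional bond percolation. Applying \prettyref{cor:fano} with $u=(0,0)$ and $v=(n,n)$ yields
$$\PP[\hat T_n \neq T_n] \,\ge\, \frac{1}{2} - \sqrt{\frac{\log 2}{2\log e}\,\perc_G((0,0),(n,n))}\,,$$
where $\perc_G((0,0),(n,n))$ is the probability that the two vertices lie in the same open cluster in $\ER(G,\eta)$, i.e., bond percolation on $\mathbb{Z}^2$ with parameter $\eta=(1-2\delta)^2$.

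It thus suffices to show $\perc_G((0,0),(n,n))\to 0$ as $n\to\infty$ whenever $\eta\le 1/2$. In the strictly subcritical regime $\eta<1/2$, the two-point function decays exponentially in $n$ by classical subcritical percolation theory (Aizenman--Barsky/Menshikov), so the claim is immediate. The critical case $\eta=1/2$ is exactly Kesten's theorem~\cite{Kesten80}: the critical probability for bond percolation on $\mathbb{Z}^2$ equals $p_c=1/2$, and at $p_c$ one has $\theta(p_c)=0$, i.e., the open cluster $C((0,0))$ is almost surely finite. Since the event $\{(0,0)\leftrightarrow(n,n)\}$ forces $|C((0,0))|\ge 2n+1$, we conclude $\perc_G((0,0),(n,n))\le \PP[|C((0,0))|\ge 2n+1]\to 0$.

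Combining the two steps gives $\PP[\hat T_n\neq T_n]\ge 1/2 - o(1)$. To obtain the two-sided conclusion $\PP[\hat T_n\neq T_n]\to 1/2$, apply the same bound to the complementary estimator $1-\hat T_n$: this yields $\PP[\hat T_n=T_n]\ge 1/2-o(1)$, hence $\PP[\hat T_n\neq T_n]\le 1/2+o(1)$, and the limit follows. (Equivalently, one may observe that the intermediate bound $I(T_n;Y_E)\le\perc_G((0,0),(n,n))\log 2 \to 0$ combined with Pinsker shows $T_n$ asymptotically decouples from $Y_E$, forcing any estimator's error to tend to $1/2$.)

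The only substantive ingredient beyond the framework already developed is Kesten's theorem at criticality; the remainder is a direct invocation of \prettyref{cor:fano}. The passage to the infinite grid $\mathbb{Z}^2$ with infinitely many observed edges is handled by the monotone convergence built into the proof of \prettyref{thm:perc}, so no additional care is needed there. The main obstacle—if one is to call it that—is simply having the critical-case result $\theta(1/2)=0$ at hand; without it, the corollary would only yield the weaker threshold $\eta<1/2$.
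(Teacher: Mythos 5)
Your proof is correct and takes essentially the same route as the paper, which simply invokes Kesten's theorem without elaboration; you supply the omitted details (the subcritical decay, Kesten's $\theta(1/2)=0$ at criticality, and the symmetric argument via the complementary estimator to upgrade the one-sided Fano bound to the two-sided limit). Nothing to add.
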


\subsection{Spiked Wigner model}


Consider the following statistical model for PCA:
\begin{equation}
Y = \sqrt{\frac{\lambda}{n}}  XX^\top + W
\label{eq:spike}
\end{equation}
where $X=(X_1,\ldots,X_n) \in \{\pm 1\}^n$ consists of independent Rademacher entries, and $W$ is a Wigner matrix which is symmetric consisting of independent standard normal off-diagonal entries.
This ensemble is known as the spiked Wigner model (rank-one perturbation of the Wigner ensemble).
Observing the matrix $Y$, the goal is to achieve correlated recovery, i.e., to reconstruct $X$ (up to a global sign flip) better than chance, that is, find $\hat X=\hat X(Y) \in \{\pm1\}^n$, such that
\begin{equation}
\liminf_{n \to\infty} \frac{1}{n}\Expect[|\Iprod{X}{\hat X}|] > 0.
\label{eq:corr}
\end{equation}
It is known that for fixed $\lambda$, if $\lambda > 1$, spectral method (taking the signs of the the first eigenvector of $Y$) achieves correlated recovery \cite{BBP05}. Conversely, if $\lambda < 1$, correlated recovery is information-theoretically impossible. 

As the next result shows, applying \prettyref{thm:perc} together with classical results on Erd\"os-R\'enyi graphs immediately yields the optimal threshold previously obtained in \cite[Theorem 4.3]{deshpande2015asymptotic}. Here, $o(1)$ is any vanishing factor so this result is the best possible.
\begin{coro}
\label{cor:spike}	
	Correlated recovery in the sense of \prettyref{eq:corr} is impossible if 
	\begin{equation}
	\lambda \leq 1 + o(1).
	\label{eq:spikethreshold}
	\end{equation}
\end{coro}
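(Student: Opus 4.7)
The plan is to translate the spiked Wigner observation into the setup of \prettyref{thm:gperc} and then reduce the impossibility of correlated recovery to subcriticality of an Erd\"os-R\'enyi graph. Take $V=[n]$, $W=\binom{[n]}{2}$, and for each measurement $w=(i,j)$ set $N(w)=\{i,j\}$, with
$$ Y_w \;=\; \sqrt{\lambda/n}\,X_i X_j + W_{ij}\,. $$
Since $Y_w$ depends on $(X_i,X_j)$ only through the product $T_w \eqdef X_i X_j$, the induced channel $P_{Y_w|T_w}$ is a binary-input AWGN channel with SNR $\lambda/n$. Passing the SDPI through the deterministic map $(X_i,X_j)\mapsto T_w$ and using the small-SNR expansion recorded in \prettyref{app:eta} yields
$$ \eta_w \;\le\; \etaKL(P_{Y_w|T_w}) \;=\; \frac{\lambda}{n}\,(1+o(1)) \;\eqdef\; \eta\,. $$

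In the bipartite percolation each measurement is retained with probability $\eta$, so connectivity in $\tilde G$ restricted to $V$ is precisely that of $\ER(n,\eta)$. By the global sign-flip symmetry $X \mapsto -X$ (under which the law of $Y$ is invariant) we have $I(X_u;Y)=0$, hence $I(X_u;X_v,Y)=I(X_u;X_v \mid Y)$. \prettyref{thm:gperc} applied with $S_1=\{u\}$, $S_2=\{v\}$ then gives
$$ \sum_{u\ne v} I(X_u;X_v,Y) \;\le\; \log 2 \cdot \Expect\Bigl[\sum_C |C|(|C|-1)\Bigr], $$
where $C$ ranges over the components of $\ER(n,\eta)$. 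For $\lambda \le 1+o(1)$ standard random-graph estimates give $\Expect[\sum_C |C|^2]=o(n^2)$: subcritically this is $O(n/(1-\lambda))$, critically $O(n^{4/3})$, and in the weakly supercritical regime $\lambda=1+\epsilon_n$ with $\epsilon_n\downarrow 0$ the giant has size $O(\epsilon_n n)=o(n)$, so the crude bound $\sum_C |C|^2 \le (\max_C |C|)\cdot n$ already suffices.

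Finally, replay the Fano argument of \prettyref{cor:fano} for $T_{uv}\eqdef X_u X_v$ — it uses only the independence and uniformity of $X_u,X_v$ together with the SDPI-type bound on $I(X_u;X_v,Y)$ just established — to obtain, for every estimator $\hat T_{uv}(Y)$,
$$ \frac{1}{n^2}\sum_{u,v}\Prob[\hat T_{uv}\neq T_{uv}] \;\ge\; \frac{1}{2}-o(1)\,. $$
Given any $\hat X=\hat X(Y)\in\{\pm 1\}^n$, take $\hat T_{uv}=\hat X_u\hat X_v$ and use the identity
$$ \frac{1}{n^2}\sum_{u,v}\Indc\{\hat X_u\hat X_v=X_u X_v\} \;=\; \frac{1}{2} + \frac{1}{2n^2}\iprod{X}{\hat X}^2 $$
to force $\Expect[\iprod{X}{\hat X}^2]=o(n^2)$ and hence $\Expect[|\iprod{X}{\hat X}|]=o(n)$ by Jensen, contradicting \eqref{eq:corr}. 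The two technical inputs are the BiAWGN SDPI bound (a short small-SNR expansion of the KL divergence between $N(\pm\mu,1)$, deferred to \prettyref{app:eta}) and the uniform control of $\Expect[\sum_C |C|^2]$ across the near-critical window $\lambda=1+o(1)$, the latter being the more delicate ingredient when $\lambda$ slightly exceeds $1$.
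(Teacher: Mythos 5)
Your proof is correct and follows essentially the same route as the paper: reduce the Gaussian pairwise observation to a binary-input channel with $\etaKL = \frac{\lambda}{n}(1+o(1))$, apply the information-percolation bound to control $\sum_{u\neq v} I(X_u;X_v,Y)$ via the component-size statistics of $\ER(n,\eta)$ (which reduce to $\Expect[C_{\max}]=o(n)$ in the near-critical window), and invoke the Fano-type bound of Corollary~\ref{cor:fano} together with the algebraic identity linking pairwise-product errors to $\iprod{X}{\hat X}^2$. The only cosmetic differences are that you invoke Theorem~\ref{thm:gperc} directly rather than going through Theorem~\ref{thm:perc} plus the post-Corollary remark, and that you sketch the sub/critical/weakly-supercritical regimes separately where the paper reduces by monotonicity to $s=\omega(n^{2/3})$ and cites a single estimate of \L{}uczak.
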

\begin{proof}
Note that \prettyref{eq:corr} is equivalent to 
\begin{equation}
\limsup_{n \to\infty} \frac{1}{n^2}\expect{\Fnorm{XX^\top - \hat X\hat X^\top}^2} < 2.
\label{eq:corr2}
\end{equation}

It is clear that the diagonal entries of $Y$ are independent of $X$ and hence the problem reduces to the setting in \prettyref{sec:infoperc} with $G$ being the complete graph on $n$ vertices and $Y_{ij} = \sqrt{\frac{\lambda}{n}}X_iX_j + W_{ij}$ for $i<j$.
Applying \prettyref{thm:perc} together with \prettyref{cor:fano}, we conclude that: for any $i<j$,
\[
\inf_{\hat T_{ij}(\cdot)} \prob{X_iX_j \neq \hat T_{ij}(Y)} \geq \frac{1}{2} - O(\prob{\text{$i$ and $j$ are connected in $\ER(n,\eta)$}}).
\]
where $\eta=\eta(N(-\sqrt{\frac{\lambda}{n}},1),N(\sqrt{\frac{\lambda}{n}},1))  = \frac{\lambda}{n}(1+o(1))$ in view of  \prettyref{eq:eta-awgn}.
Summing over $i \neq j$, we conclude that for any $\hat X= \hat X(Y) \in \{\pm 1\}^n$, 
\begin{align*}
\Expect\Fnorm{XX^\top - \hat X\hat X^\top}^2 
= & ~ 4 \sum_{i \neq j} \prob{X_iX_j \neq \hat X_i\hat X_j} \\
\geq & ~ 	2n^2 - 2 \sum_{i\in[n]} \expect{\text{size of the connected component in $\ER(n,\eta)$ containing $i$} } \\
\geq & ~ 	2n^2 - n \expect{C_{\max}},
\end{align*}
where $C_{\max}$ denotes the size of the largest connected component in the Erd\H{o}s-R\'enyi graph $\ER(n,\eta)$.
Existing results in the random graph theory show that $\Expect[C_{\max}]=o(n)$ whenever $\eta = \frac{1}{n}(1+o(1)$, which implies the impossibility of \prettyref{eq:corr2}.
Specifically, let $\eta = \frac{1}{n^2}(n+s)$, where $s=o(n)$ by assumption.
By monotonicity, it suffices to consider the case of $s = \omega(n^{2/3})$.
By a result of \L{}uczak \cite[Lemma 3]{Luczak90} (see also \cite[Theorem 5.12]{JLR00}), we have
$C_{\max} \leq c_0 s$ with probability at least $1-c_1n^{1/3}s^{-1/2}$ for some universal constants $c_0,c_1$.
Since $C_{\max} \leq n$, this shows $\Expect[C_{\max}] = o(n)$, completing the proof.
\end{proof}

\begin{remark}[Channel universality]
\label{rmk:generalchannel}	
	Consider a more general observation model than \prettyref{eq:spike}: 
	Let $P(\cdot|\theta)$ be a family of conditional distributions parametrized by $\theta \in \reals$, with conditional density $p_{\theta}(\cdot)$ with respect to some reference measure $\mu$.
Given $M = \sqrt{\frac{\lambda}{n}} XX^\top$, we observe the matrix $Y=(Y_{ij})$, where each $Y_{ij}$ is obtained by passing $M_{ij}$ through the same channel independently, with the conditional distribution given by $P_{Y_{ij}|M_{ij}} = P(\cdot|M_{ij})$.
The spiked Wigner model corresponds to the Gaussian channel $P(\cdot|\theta) = N(\theta,1)$.

Under appropriate regularity conditions on the channel, the sharp threshold \prettyref{eq:spikethreshold} is replaced by the following:
\begin{equation}
\lambda \leq \frac{1}{J_0} + o(1)
\label{eq:spike-general}
\end{equation}
where $J_\theta  \triangleq \int (\frac{\partial p_\theta}{\partial \theta})^2 \frac{1}{p_\theta} d\mu$ is the Fisher information.
This follows from the relationship between the contraction coefficient and the Fisher information. To see why this is true intuitively, 
note that $M_{ij} \in \{\pm \epsilon\}$, with $\epsilon \triangleq \sqrt{\frac{\lambda}{n}}$.
Using the characterization \prettyref{eq:etaLC} of the contraction coefficient for binary-input channels, we have 
$\eta = \sup_{\beta \in [0,1]} \LC_{\beta}(p_{\epsilon}\|p_{-\epsilon})$, where  
$\LC_\beta$ is an $f$-divergence\footnote{Recall an $f$-divergence is defined as $D_f(P\|Q) = \Expect_P[f(\frac{dP}{dQ})]$ for convex $f$ with $f(1)=0$ \cite{Csiszar69}.} with $f(x)=f_\beta(x) = \beta\bar\beta \frac{(x-1)^2}{\beta x + \bar\beta}$. 
By the local expansion of $f$-divergence,
we have 
$D_f(P_{\theta-\delta}\|P_\theta) = \frac{f''(1) J_\theta }{2} \delta^2(1+o(1))$ as $\delta\to 0$. 
Note that $f_\beta''(1) = 2 \beta \bar \beta$, maximized at $\beta=\frac{1}{2}$. It follows that 
$\eta = \frac{\lambda J_0 + o(1)}{n}$.
Thus the same percolation bound used in \prettyref{cor:spike} shows that \prettyref{eq:spike-general} implies the
impossibility of correlated reconstruction.
In the positive direction, it was suggested in \cite[Section II-C]{lesieur2015mmse} that spectral method applied to the
score matrix succeeds provided that $\lambda > \frac{1}{J_0}$. In fact, the full mutual information $I(M; Y)$ also undergoes a phase transition at this point, see \cite{krzakala2016mutual} and~\cite{Barbier16_mutual}.
\end{remark}

\subsection{Community detection: two communities}

Consider a complete graph $K_n$ and $X_v \simiid \Bern(1/2)$. Unlike the group-synchronization case, we have the following
observation channel: for each edge $e=(u,v)$ we have
\begin{equation}\label{eq:cd_channel}
	Y_e = \begin{cases} \Bern(p), & X_u=X_v\\
			\Bern(q), & X_u \neq X_v
	\end{cases} 
\end{equation}	
In other words, $Y$ is the adjacency matrix of a random graph (known as the stochastic block model), in which any pair of vertices are connected with probability $p$ if they are from the same community (with the same labels) or with probability $q$ otherwise.

Given the matrix $Y=(Y_{ij})$, the goal is to achieve correlated recovery, that is, estimating the labels up to a global flip better than random guess. In other words, construct $\hat X = \hat X(Y) \in \{0,1\}^n$, such that
\begin{equation}
\limsup_{n\diverge} \frac{1}{n}\Expect[\min\{ d(\hat X, X), n-d(\hat X, X)\}] < \frac{1}{2},
\label{eq:corr-sbm}
\end{equation}
where $d$ denotes the Hamming distance.
Equivalently, the goal is to estimate $\indc{X_i = X_j}$ for any pair $i,j$ on the basis of $Y$ with probability of error
asymptotically (as $n\to \infty$) not tending to $1/2$. The exact region when this is impossible is
known~\cite{mossel2015reconstruction,mossel2013proof}: 
for $p=\frac{a}{n}$ and $q=\frac{b}{n}$ with fixed $a,b$, correlated recovery is possible
if and only if
	$$ {(a-b)^2\over 2(a+b)} > 1. $$

Appying the information-percolation method (namely Theorem~\ref{thm:gperc}) we get the following slightly suboptimal result (see
Fig.~\ref{fig:nms_comp}).

\begin{figure}[t]
\centering
\includegraphics[width=.5\textwidth]{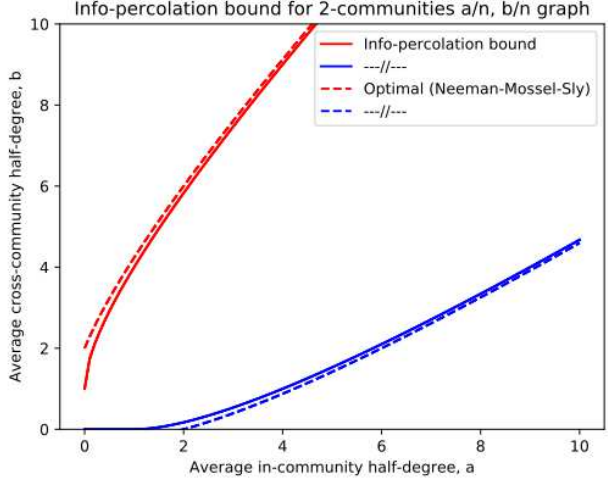}
\caption{Comparing optimal (Mossel-Neeman-Sly~\cite{mossel2015reconstruction}) region with the percolation bound.}
\label{fig:nms_comp}
\end{figure}

\begin{proposition} 
\label{prop:sbm2}
For the binary stochastic block model with edge probabilities $p$ and $q$, for any $i \neq j \in [n]$, the following bound holds non-asymptotically:
\begin{equation}
I(X_i; X_j, Y_E) \leq \prob{\text{$i$ and $j$ are connected in $\ER(n,\eta)$}}
\label{eq:MI-sbm}
\end{equation}
where $\eta = p+q-2pq + 2\sqrt{p(1-p) q(1-q)}$.
Furthermore, if $p=\frac{a}{n}$ and $q=\frac{b}{n}$, then correlated recovery (i.e., \prettyref{eq:corr-sbm}) is impossible if
\begin{equation}
(\sqrt{a} - \sqrt{b})^2 < 1+o(1).
\label{eq:perc-2sbm}
\end{equation}
\end{proposition}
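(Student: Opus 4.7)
The plan is to apply Theorem~\ref{thm:gperc} to the bipartite graph whose factor vertices are $W=E(K_n)$ with $N(e)=\{u,v\}$ for each $e=(u,v)$, using the per-edge channel \prettyref{eq:cd_channel}. Two ingredients are needed: (a) identify the KL-contraction coefficient $\eta_w$ of the per-edge channel $(X_u,X_v)\to Y_e$; and (b) verify that $I(X_v;Y_E)=0$ so that $I(X_i;X_j,Y_E)=I(X_i;X_j\mid Y_E)$, which is what Theorem~\ref{thm:gperc} actually bounds.

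For (a), since $Y_e$ depends on $(X_u,X_v)$ only through $Z_e=X_u\oplus X_v$, the contraction coefficient of $P_{Y_e|X_u,X_v}$ equals that of the binary-input channel $Z\mapsto Y$ with $P_{Y|Z=0}=\Bern(p)$ and $P_{Y|Z=1}=\Bern(q)$ (fix $X_v\equiv 0$, $X_u=Z$ in the variational definition of $\etaKL$). The binary-input characterization reviewed in Appendix~\ref{app:eta} (equation \prettyref{eq:etaLC}) then evaluates the contraction coefficient of this Bernoulli--Bernoulli channel to the quantity $\eta$ in the statement. For (b), the global spin flip $X_V\mapsto X_V\oplus\mathbf{1}$ preserves the event $\{X_u=X_v\}$ and hence the conditional law of $Y_E$; the symmetry argument used to establish \prettyref{eq:fpp1} then gives $I(X_v;Y_E)=0$. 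Combining (a) and (b) with Theorem~\ref{thm:gperc} for $S_1=\{i\}$, $S_2=\{j\}$ (using $\sup_v H(X_v)=\log 2$) yields \prettyref{eq:MI-sbm}.

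For the impossibility claim, I would invoke the variant of Corollary~\ref{cor:fano} spelled out in its subsequent remark, applied to the pair functional $T_{ij}=\indc{X_i=X_j}\sim\Bern(1/2)$. Combining that Fano-type inequality with \prettyref{eq:MI-sbm} gives
\[
\PP[\hat T_{ij}\ne T_{ij}]\ \ge\ \tfrac12-\sqrt{\tfrac{\log 2}{2\log e}\,\prob{i\leftrightarrow j\text{ in }\ER(n,\eta)}}.
\]
Averaging over $(i,j)$ and applying Jensen's inequality, correlated recovery in the sense of \prettyref{eq:corr-sbm} is ruled out provided
\[
\frac{1}{n^2}\sum_{i,j}\prob{i\leftrightarrow j}\ =\ \frac{1}{n}\Expect|C_1|\ \longrightarrow\ 0,
\]
where $C_1$ is the component of a fixed vertex in $\ER(n,\eta)$. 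Taylor expanding the formula for $\eta$ at $p=a/n$, $q=b/n$ gives $n\eta=(\sqrt a-\sqrt b)^2+o(1)$, so the hypothesis $(\sqrt a-\sqrt b)^2<1+o(1)$ puts $\ER(n,\eta)$ at or just below criticality; the required bound $\Expect|C_1|=o(n)$ then follows from exactly the \L uczak-type component-size estimate \cite{Luczak90} already used in the proof of Corollary~\ref{cor:spike}.

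The main technical point is the exact evaluation of $\eta$ via the binary-input characterization of Appendix~\ref{app:eta}: any cruder bound (e.g.~using $\eta_{\chi^2}$ alone, or a Hellinger bound with a wrong constant) would spoil the leading coefficient $(\sqrt a-\sqrt b)^2$ in $n\eta$ and break the matching with the percolation threshold. Once $\eta$ is correctly identified, the rest of the argument is a routine combination of Theorem~\ref{thm:gperc}, a Fano-type inequality, and standard subcritical/near-critical random-graph estimates.
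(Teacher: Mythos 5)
Your proof is correct and takes essentially the same approach as the paper's, which simply cites \prettyref{thm:perc}, the exact SDPI formula \prettyref{eq:eta-binary}, and the giant-component argument from the proof of \prettyref{cor:spike}; your variant through \prettyref{thm:gperc} together with the symmetry argument establishing $I(X_i;Y_E)=0$ is an equivalent (and arguably cleaner) route, since \prettyref{thm:perc} as literally stated is for the BSC channel rather than the $\Bern(p)/\Bern(q)$ edge channel of the SBM. One small note: the displayed $\eta$ in the proposition has a sign typo and should read $p+q-2pq-2\sqrt{p(1-p)q(1-q)}$, as in \prettyref{eq:eta-binary}; your expansion $n\eta=(\sqrt a-\sqrt b)^2+o(1)$ implicitly uses this corrected sign (with the $+$ sign one would get $(\sqrt a+\sqrt b)^2$ instead, which would break the threshold).
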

\begin{proof}
The mutual information bound \prettyref{eq:MI-sbm} follows from \prettyref{thm:perc} and the exact expression for the contraction coefficients in \prettyref{eq:eta-binary}, which satisfies
\begin{equation}
\etaKL(\Bern(a/n), \Bern(b/n)) = \frac{(\sqrt{a}-\sqrt{b})^2+o(1)}{n},
\label{eq:etaab}
\end{equation}
where the $o(1)$ terms is uniform in $(a,b)$ in view \prettyref{eq:eta-binary1}.
The remaining proof is the same as \prettyref{cor:spike} using the behavior of the giant component of the \ERG graph.
\end{proof}

\apxonly{REMARK: Let $TEC(p,\delta)$ be a composition of $BSC(\delta)$ followed by $BEC(1-p)$. We can apply
Theorem~\ref{thm:compare} to upper bound $I(X_i; X_j, Y_E) \le I(X_i; X_j, \tilde Y_E)$ by selecting $p=c/n$ and
$\delta$ so that $TEC(c/n,\delta)$ is less-noisy than $\Bern(a/n), \Bern(b/n)$ channel. Now reconstruction problem over
$TEC(c/n,\delta)$ is weakly-solvable iff $(1-2\delta)^2 c > 1$ (converse is broadcasting on trees, achievability -- did
not do.). So we could as if we get better bound in this way. Surprisingly, the answer is no: the best bound on $(a,b)$
is when we compare to vanilla erasure channel $\delta=0$. }

\apxonly{More remarks:
\begin{enumerate}
\item $I_{\chi^2}(X_u; Y_e, X_v) = I_{\chi^2}(X_u,X_v, Y_e) = {1\over n} {(a-b)^2\over 2(a+b)}$
\item $\eta_{KL}(\Bern(1/2), P_{Y_e|X_u,X_v}) > I_{\chi^2}(X_u,X_v, Y_e)$ -- i.e. we cannot tighten the bound by somehow
replacing full contraction coeff with uniform-prior contraction coeff.
\item $\eta_{KL}(\Bern(1/2), P_{Y_e|X_u,X_v}) < \eta_{KL}(P_{Y_e|X_u,X_v})$, so that uniform-prior coeff is strictly
smaller.
\end{enumerate}
}

\subsection{Community detection: $k$ communities}

In the setting of the previous section, suppose now that $X_v \simiid \mathrm{Unif}[k]$, with the same observation channel \prettyref{eq:cd_channel} with $p=\frac{a}{n}$ and $q=\frac{b}{n}$.
This is the stochastic block model with $k$ equal-sized communities, and the notion of correlated recovery is extended as follows:
for any $x,\hat x \in [k]^n$, define the following error metric:
\begin{equation}
d(x,\hat x) \triangleq  \min_{\pi \in S_k} \frac{1}{n}\sum_{i\in[n]} \indc{x_i \neq \pi(\hat x_i)}
\label{eq:lossd}
\end{equation}
that is, the number of classification errors up to a global permutation of labels.
We say that correlated recovery is possible if there exists a (sequence of) estimator $\hat X \in [k]^n$ that outperforms random guessing, i.e., 
\begin{equation}
\limsup_{n\diverge} \eexpect{ d(X,\hat X)} < \frac{k-1}{k}.
\label{eq:corr-sbmk}
\end{equation}
For $k \geq 3$, the sharp threshold is not known.
In terms of the impossibility result, the best known sufficient condition is \cite[Theorem 1]{banks2016information}
\begin{equation}
\frac{(a-b)^2}{a+(k-1)b} < \frac{2k \log(k-1)}{k-1}.
\label{eq:banks}
\end{equation}

Now, it turns out that applying \prettyref{thm:perc} would only yield a $k$-independent bound~\eqref{eq:perc-2sbm}. To
get an improved estimate, instead, we use the comparison theorem with the erasure model in
\prettyref{thm:compare} and then show the impossibility of reconstruction on the corresponding erasure model. The
threshold is given by \prettyref{eq:perc-ksbm} in the next proposition and the numerical comparison with the bound of
\prettyref{eq:banks} is shown in \prettyref{fig:banks_comp}. For $k=3$, \prettyref{eq:perc-ksbm} improves over
\prettyref{eq:banks} in some regime but not for $k=4$. For large $k$, \prettyref{eq:perc-ksbm} is suboptimal by a logarithmic factor.

\begin{figure}[t]
\centering
\includegraphics[width=.45\textwidth]{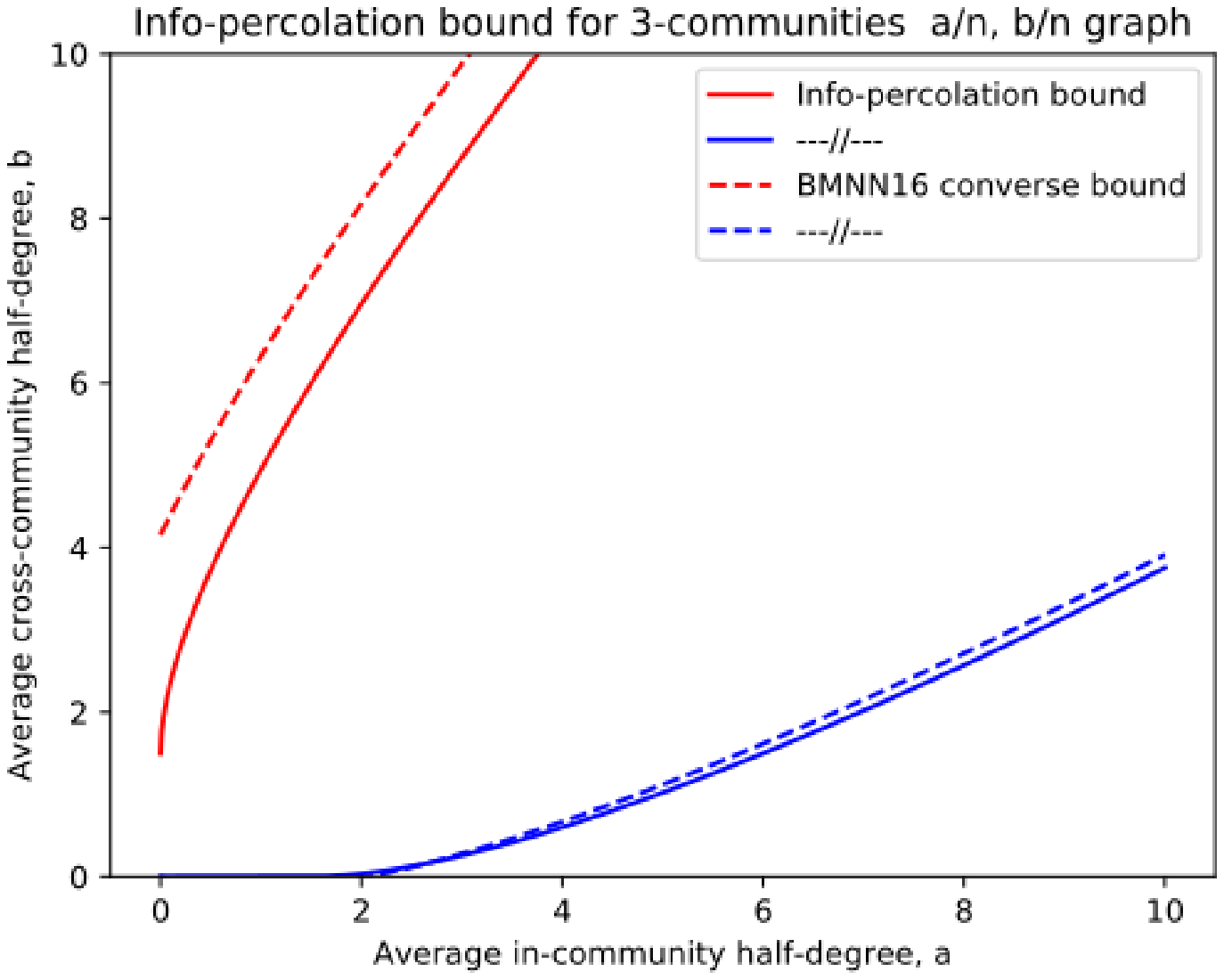}\hskip .1\textwidth
\includegraphics[width=.45\textwidth]{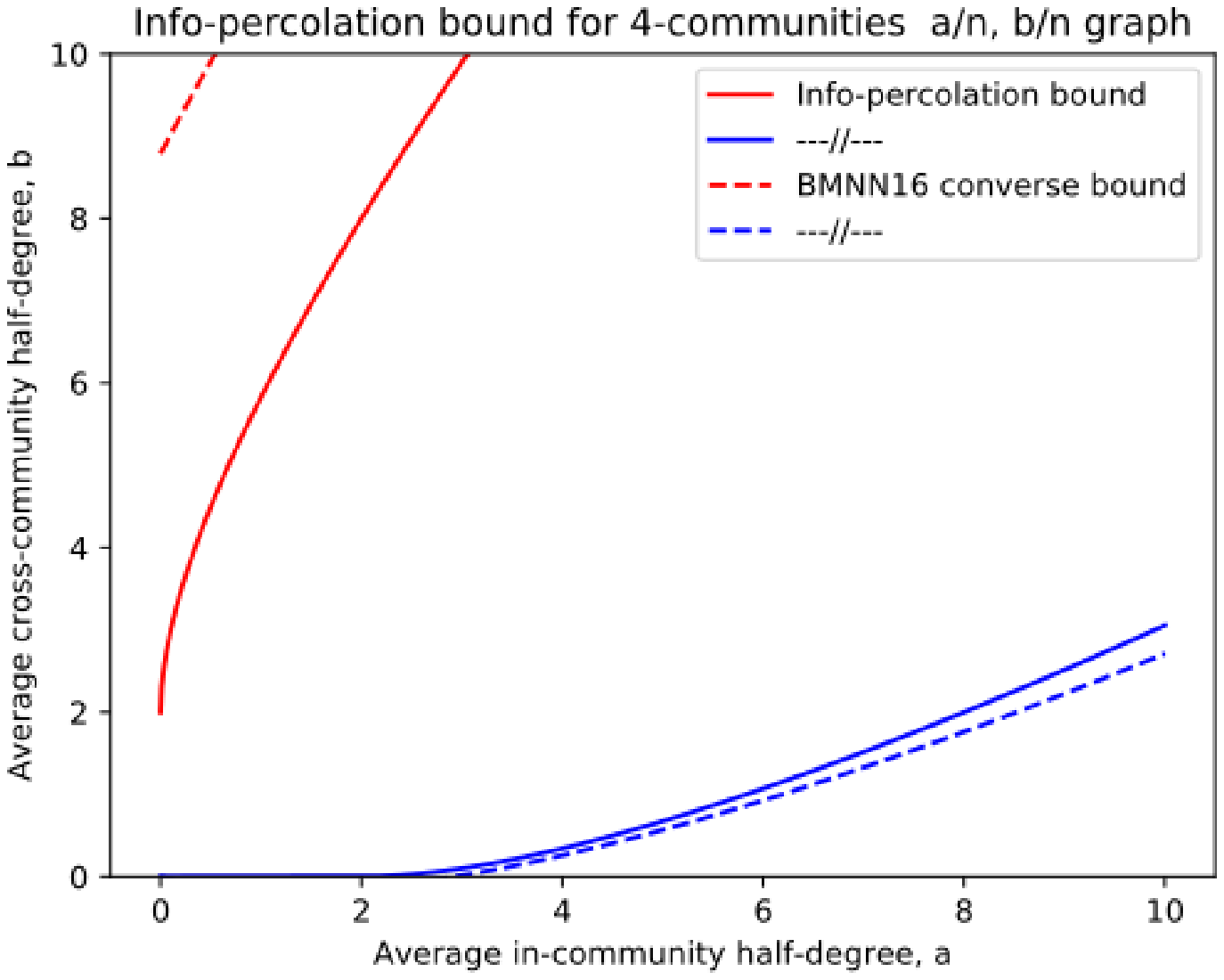}
\caption{Comparing the inner (impossibility) bound of~\cite{banks2016information} with 
Prop.~\ref{prop:ksbm} for $k=3$ and $k=4$ communities. For $k=3$, Prop.~\ref{prop:ksbm} improves the state of the art.}
\label{fig:banks_comp}
\end{figure}

\begin{proposition}\label{prop:ksbm}
Correlated recovery in the sense of \prettyref{eq:corr-sbmk} is impossible if
	\begin{equation}
	(\sqrt{a} - \sqrt{b})^2 \leq \frac{k}{2}.
	\label{eq:perc-ksbm}
	\end{equation}	
\end{proposition}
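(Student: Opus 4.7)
The plan is to apply \prettyref{thm:compare} to replace the $k$-SBM observations with an erasure channel that reveals only the equality indicator $T_e := \indc{X_u=X_v}$, and then to prove the impossibility directly on the derived erasure model. For the comparison step, I would introduce for each edge $e=(u,v)$ the channel $Q_{\tilde Y_e\mid X_u,X_v}$ that outputs $\tilde Y_e = T_e$ with probability $\eta = \etaKL(\Bern(a/n),\Bern(b/n)) = \tfrac{(\sqrt a -\sqrt b)^2}{n}(1+o(1))$ and $\tilde Y_e = *$ otherwise. The SBM channel $P_{Y_e\mid X_u,X_v}$ factors as $(X_u,X_v)\to T_e\to Y_e$ through a binary-input channel of $\etaKL$-coefficient exactly $\eta$ (by \prettyref{eq:etaab}), so by \prettyref{rmk:beccompare} the channel $Q$ is less noisy than $P$. \prettyref{thm:compare} with $S_1=\{i\}$, $S_2=\{j\}$, together with the independence $X_i\indep X_j$, then yields
\[
I(X_i;X_j,Y_E) \;=\; I(X_i;Y_E\mid X_j) \;\le\; I(X_i;\tilde Y_E\mid X_j) \;=\; I(X_i;X_j,\tilde Y_E)
\]
for every pair $i\ne j$.

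The second stage is to show that in the derived erasure model the average $\frac{1}{n^2}\sum_{i,j}I(X_i;X_j,\tilde Y_E)\to 0$ under the hypothesis $c:=(\sqrt a - \sqrt b)^2 \le k/2$, which via a $k$-ary version of the Fano bound of \prettyref{cor:fano} (with $1-\max_s\Prob[T_{ij}=s]=1/k$ and $\max_v H(X_v)=\log k$) precludes correlated recovery as in \prettyref{eq:corr-sbmk} for the loss \prettyref{eq:lossd}. The observation $\tilde Y_E$ is equivalent to an independent graph $G\sim\ER(n,\eta)$ with equality/inequality labels on its edges, and the posterior $P_{X\mid\tilde Y_E}$ is uniform on the $k$-colorings of $G$ respecting these labels. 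I would analyze $I(X_i;X_j\mid\tilde Y_E)$ via the local Galton--Watson tree approximation of $G$ (Poisson$(c)$ offspring) and track the overlap $\Prob[X_i^{(1)}=X_i^{(2)}\mid\tilde Y_E]$ between two independent draws from the posterior by a recursion up the tree; the threshold $c\le k/2$ should appear as the critical coupling at which this overlap collapses to its prior value $1/k$.

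The main obstacle is the second stage: a direct application of \prettyref{thm:gperc} to the erasure-on-$T$ model produces only the weaker threshold $c\le 1$ from ordinary $\ER(n,\eta)$-percolation, so the extra factor $k/2$ must be extracted by a Potts-style two-sample analysis that exploits the sharp asymmetry between ``same''-edges (which occur with marginal probability $1/k$ per edge of $G$ and carry a full $\log k$ nats of constraint) and ``different''-edges (probability $(k-1)/k$ but carrying only $O(1/k)$ nats of constraint). The fact that even with this refinement we fall short of the Banks--Moore bound \prettyref{eq:banks} by a logarithmic factor (as noted above the proposition) is consistent with the second-stage argument being fundamentally a first-moment/tree-recursion computation rather than a sharp second-moment analysis.
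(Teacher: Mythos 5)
Your first stage matches the paper: compare the $k$-SBM channel on each edge with an erasure channel that reveals $\indc{X_u=X_v}$ with probability $\eta = \etaKL(\Bern(a/n),\Bern(b/n))$, using \prettyref{thm:compare} and \prettyref{rmk:beccompare}. You also correctly diagnose that the naive percolation bound on the erasure model gives only $c\le 1$ and that the factor $k/2$ must be squeezed out of a finer analysis. But at exactly this point the proposal stops being a proof: you propose a ``Potts-style two-sample analysis'' tracking the posterior overlap, and simply assert that the threshold $c\le k/2$ ``should appear.'' This is the entire content of the proposition and it is not carried out. The paper's actual mechanism is a \emph{coupling}, not an overlap recursion: on the local Galton--Watson tree of $\ER(n,\eta)$, one constructs a joint law of $(X^{+}_v,X^{-}_v)$ under $X_u=x$ versus $X_u=x'$ such that a child of an uncoupled vertex stays uncoupled with probability exactly $\Prob[\text{same}] + \Prob[\text{different}]\cdot\tfrac{1}{k-1} = \tfrac{2}{k}$, so the uncoupled set is a $\Poi(2d/k)$ branching process which is subcritical precisely when $d\le k/2$. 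A two-sample overlap recursion is plausibly related, but it is not obviously equivalent, and deriving $k/2$ from it is the nontrivial step you have elided.

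There is a second, independent gap in the final step. You propose bounding $\frac{1}{n^2}\sum_{i,j}I(X_i;X_j,\tilde Y)$ and invoking ``a $k$-ary version of the Fano bound of \prettyref{cor:fano}.'' For $k=2$ this works because a global sign flip is exactly captured by the pairwise XOR. For $k\ge 3$ the loss \prettyref{eq:lossd} is invariant under an arbitrary permutation of $[k]$, and it is not immediate that controlling the pairwise indicators $\indc{X_i=X_j}$ suffices. The paper sidesteps this by proving $I(X_S;\tilde Y)=o(1)$ for every fixed finite $S$ (via a chain rule plus the coupling), then comparing $\Expect[d(X_S,\hat X_S)]$ to the pure-random-guess value with Pinsker's inequality and \prettyref{lmm:randomguess}, which handles the $\min_\pi$ over all $k!$ permutations via a union bound. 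If you want to run the pairwise route instead, you need to supply the argument that non-trivial estimation of $\indc{X_i=X_j}$ on a constant fraction of pairs follows from correlated recovery under a permutation-invariant loss; that is a genuine lemma, not a parenthetical adjustment of \prettyref{cor:fano}.
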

\begin{proof}
We start by setting up the mutual comparison with the corresponding model per \prettyref{thm:compare}.
Let $\eta = \frac{(\sqrt{a}-\sqrt{b})^2+o(1)}{n}$ be given in \prettyref{eq:etaab}.
Define the corresponding erasure model on the same graph: for each $(u,v)\in \binom{n}{[2]}$, let $\tilde Y_{uv} = \indc{X_u = X_v}$ with probability $\eta$ and 
$\tilde Y_{uv} = ?$ with probability $1-\eta$ independently. 
Equivalently, the reconstruction problem under the erasure model can be phrased as follows. 
Let $G=([n],E)$ denote an Erd\"os-R\'enyi graph $\ER(n,\eta)$ independent of $X$.
Then for each $(u,v) \in E$, we observe a deterministic function $\tilde Y_{uv} = \indc{X_u = X_v}$.
 By Theorem~\ref{thm:compare} 
and \prettyref{rmk:beccompare}, we have the following comparison result: for any $S \subset [n]$, 
\begin{equation}
I(X_S; Y) \leq I(X_S; \tilde Y).
\label{eq:compare-k}
\end{equation}
By symmetry, $I(X_S; \tilde Y)$ only depends on $|S|$. Next we assume $S=[m]$ and show that for any fixed $m$,
\[
I(X_S; \tilde Y) = o(1), \quad n\to\infty
\]
under the condition that $(\sqrt{a} - \sqrt{b})^2 \leq {k\over 2}$.

By the chain rule, we have
\begin{align}
I(X_S; \tilde Y)
= & ~ I(X_1;\tilde Y) + I(X_2;\tilde Y|X_1) + \ldots I(X_m;\tilde Y|X_1,\ldots,X_{m-1}) \nonumber \\
= & ~ \sum_{u=2}^{m} I(X_u;X_1,\ldots,X_{u-1},\tilde Y), \label{eq:chainrule}
\end{align}
where we used the fact that $X_i$'s are independent and $I(X_1;\tilde Y)=0$.

Next using the local tree structure of $G$, we show that for each $u$, $I(X_u;X_1,\ldots,X_{u-1},\tilde Y)=o(1)$.
Condition on the realization of $G$. Fix $t$ to be specified later. 
 Let $G_u^t$ denote the $t$-hop neighborhood of $u$. Let $R$ be the boundary of $G_u^t$, i.e., the 
set of vertices that are at distance $t$ to $u$.
For any $v$ whose distance to $u$ exceeds $t$, 
$R$ forms a cut separating $u$ and $v$ in the sense that any path from $u$ to $v$ passes through $S$.
Then for any set of vertices $U$ outside the $t$-hop neighborhood of $r$, 
 we have
\begin{equation}
I(X_u;X_{U},\tilde Y_E) \leq I(X_u;X_R,\tilde Y_E) = I(X_u;X_R,\tilde Y_{\leq t}),
\label{eq:MIcut}
\end{equation}
where $\tilde Y_{\leq t} \triangleq \tilde Y_{E(G_u^t)}$.
Indeed, the first inequality follows from the fact that
$X_u \to X_R \to X_{S'}$ forms a Markov chain conditioned on $\tilde Y_E$, and the 
second inequality follows from the independence of $X_u$ and $Y_{E(G) \backslash E(G_u^t)}$ conditioned on the $(X_R,Y_{\leq t})$.

By \cite[Proposition 12]{PW14a}, since $X_u$ only takes $k$ values, we can bound the mutual information by the total variation as follows:
\begin{equation}
I(X_u;X_R,\tilde Y_{\leq t})
\leq \log (k-1) T(X_u;X_R,\tilde Y_{\leq t}) + h(T(X_u;X_R,\tilde Y_{\leq t}))
\label{eq:Tinfo}
\end{equation}
where 
$h(x) \triangleq x \log \frac{1}{x}+(1-x)\log \frac{1}{1-x}$, and  
\begin{equation}
T(X_u;X_R,\tilde Y_{\leq t}) \triangleq 
\Expect[\TV(P_{X_R,\tilde Y_{\leq t}|X_u}, P_{X_R,\tilde Y_{\leq t}})] \leq \max_{x,x'\in[k]} \TV(P_{X_R,\tilde Y_{\leq t}|X_u=x}, P_{X_R,\tilde Y_{\leq t}|X_u=x'})
\label{eq:T-maxTV}
\end{equation}
where the last inequality follows from the convexity of the total variation.

Now choose $t=t_n$ such that $t=\omega(1)$ and $t=o(\log n)$.
We show that
\begin{equation}
\tau \triangleq \max_{x,x'\in[k]} \TV(P_{X_R,\tilde Y_{\leq t}|X_u=x}, P_{X_R,\tilde Y_{\leq t}|X_u=x'}) = o(1).
\label{eq:main-commg-tv}
\end{equation}
To this end, let $T_u^t$ denote a depth-$t$ Galton-Watson (GW) tree rooted at $u$ with offspring distribution $\Poi(d)$, with $d\triangleq n \eta$ is at most a constant by assumption. By the locally tree-like property of the Erd\"os-R\'enyi graph (see, e.g., \cite[ Proposition 4.2]{mossel2015reconstruction} with $p=q$), there exists a coupling between $T_u^t$ and $G_u^t$ such that $\prob{G_u^t=T_u^t} = 1-o(1)$.
In the sequel we condition on the event of $G_u^t=T_u^t$
In particular, by standard results in branching process \cite{AthreyaNey}, 
the expected number of $i$th progeny is $d^i$ and hence 
the expect size of the $t$-neighborhood of $u$ is $\frac{d^{t+1}-1}{d-1}$. By the Markov inequality,
the size of the $t$-neighborhood of $u$ is at most $M \triangleq (Cd)^t = n^{o(1)}$ with probability $1 -o(1)$.
In other words, the majority of $v$ are outside the $t$-neighborhood of $u$.
Next we conditioned on the event $G_u^t=T_u^t$ and abbreviate $T_u^t$ as $T$. For each $x \neq x'$, we construct a coupling 
$\{X_v^+,X_v^-: v \in V(T)\}$ and $\{Y_e: e \in E(T)\}$ 
so that $(X_{V(T)}^+,Y_{E(T)})$ and $(X_{V(T)}^-,Y_{E(T)})$ are distributed as the law 
of $(X_{V(T)},Y_{E(T)})$ conditioned on the root $X_u=x$ and $X_u=x'$, respectively.
The coupling is defined inductively as follows:
First set $X_u^+=x$ and $X_u^-=x'$. Next we generate each layer of observations recursively as follows:
Given all the $X_v$'s and $Y_e$'s up to depth $k$, draw $Y_e=\Bern(1/k)$ independently for all edges between the $k$th and the $(k+1)$th layer. For each edge $e=(i,j)$ so that $i$ is on $k$th layer and $j$ is on $(k+1)$th layer, 
if $X^+_i=X^-_i$, we couple all observations on the subtree rooted at $i$ together, that is, set $X^+_j=X^-_j=X^+_i$ if $Y_e=1$  and $X^+_j=X^-_j=R$ if $Y_e=0$ where $R$ is drawn uniformly at random from $[k]\setminus \{X^+_i\}$;
if $X^+_i\neq X^-_i$, we proceed as follows:
\begin{itemize}
	\item if $Y_e = 1$, set $X_{j}^+ = X_{i}^+$ and $X_j^- = X_i^-$.
	\item if $Y_e = 0$, with probability $\frac{k-2}{k-1}$, set $X^+_j = X^-_j = R$ with
	$R$ drawn uniformly at random from $[k] \setminus \{X^+_i, X^-_i\}$, and with probability $\frac{1}{k-1}$ set $X^+_j = X^-_i$, and $X^-_j = X^+_i$.
\end{itemize}
Note that 
for each $i$ and each of its child $j$, we have 
\[
\prob{X_j^+ \neq X_j^- |X_i^+ \neq X_i^-} = \prob{Y_e = 1} + \prob{Y_e = 0} \frac{1}{k-1} = \frac{2}{k}.
\]
Thus, the number of uncoupled pairs $(X_i^+,X_i^-)$ evolves as a GW tree with offspring distribution $\Poi(\frac{2d}{k})$, which dies out if $\frac{2d}{k} \leq 1$ (see, e.g., \cite[Theorem 1]{AthreyaNey}), in which case we have $\TV(P_{X_{V(T)},Y_{E(T)}|X_u=x}, P_{X_{V(T)},Y_{E(T)}|X_u=x'}) \leq \prob{X_R^+ \neq X_R^-} \to 0$, as $t\to\infty$.
This completes the proof of  \prettyref{eq:main-commg-tv}.

Combining \prettyref{eq:Tinfo}--\prettyref{eq:main-commg-tv}, we have
\[
I(X_u; X_1,\ldots,X_{u-1},\tilde Y) \leq \log(k-1) \tau + h(\tau) + (1- \prob{E \cap E'})\log k 
\]
where $E = \{G_u^t=T_u^t, |V(T_u^t)| \leq M \}$, $M=(Cd)^t=n^{o(1)}$, 
and $E'$ denotes the event that $1,\ldots, u-1$ are all outside the $t$-hop neighborhood of $u$.
We have already shown that $\tau=o(1)$ and $\prob{E}=1-o(1)$. Furthermore, by symmetry
$\prob{E'} =  \frac{M-1}{n-1} \cdots \frac{M-u}{n-u} \geq (\frac{M-m}{n-m})^m = 1-o(1)$.
To summarize, we have shown that $I(X_u; X_1,\ldots,X_{u-1},\tilde Y) = o(1)$ and, in view of \prettyref{eq:chainrule},
\begin{equation}
I(X_S; \tilde Y) = o(1)
\label{eq:IXSY}
\end{equation}
for $S=[m]$ and hence any $S \in \binom{[n]}{m}$.

Finally, using \prettyref{eq:IXSY} for appropriately chosen $m$, we show the impossibility of the correlated recovery \prettyref{eq:corr-sbmk}.
First of all, note that for any fixed $x,\hat x\in[k]^n$ and any $m\in [n]$ we have
\begin{equation}
d(x,\hat x)
\geq \Expect_{S}[d(x_{\sfS},\hat x_{\sfS})] \label{eq:davg}
\end{equation}
where ${\sfS}  \sim \Unif(\binom{[n]}{m})$ and recall that for any $S$, we have 
$d(x_S,\hat x_S) = \frac{1}{|S|}  \min_{\pi \in S_k} \sum_{i\in S} \indc{x_i \neq \pi(\hat x_i)}$ per \prettyref{eq:lossd}. The inequality \prettyref{eq:davg} simply follows from
\begin{align}
d(x,\hat x)
= & ~ \min_{\pi \in S_k} \probs{x_I \neq \hat x_{\pi(I)}}{I \sim \Unif([n])}	\nonumber \\
= & ~ \min_{\pi \in S_k} \Expect_{\sfS \sim \Unif(\binom{[n]}{m})} \probs{x_I \neq \hat x_{\pi(I)}}{I \sim \Unif({\sfS})}	\nonumber \\
= & ~ \Expect_{{\sfS}} \min_{\pi \in S_k} \probs{x_I \neq \hat x_{\pi(I)}}{I \sim \Unif({\sfS})}	\nonumber \\
\geq & ~ \Expect_{{\sfS}} [d(x_{\sfS},\hat x_{\sfS})] \nonumber.
\end{align}
Fix a constant $m$ independent of $n$. 
For any estimator $\hat X=\hat X(Y) \in [k]^n$, applying \prettyref{eq:davg} yields
\begin{equation}
\Expect[d(X_{\sfS},\hat X_{\sfS})] \leq \Expect[d(X,\hat X)], \label{eq:davg2}
\end{equation}
where ${\sfS}$ is a random uniform $m$-set independent of $X,\hat X$.

By the data processing inequality, we have for any fixed $S$,
\[
I(X_S; \hat X_S) \leq I(X_S; Y) \overset{\prettyref{eq:compare-k}}{\leq} I(X_S; \tilde Y) \overset{\prettyref{eq:IXSY}}{=} o(1).
\]
By Pinsker's inequality, we have 
$\TV(P_{X_S, \hat X_S}, P_{X_S} \otimes P_{\hat X_S}) \leq \sqrt{2 I(X_S; \hat X_S)} = o(1)$.
Note that the loss function $d$ defined in \prettyref{eq:lossd} is bounded by one.
Thus 
\begin{equation}
\Expect[d(X_S,\hat X_S)] \geq \Expect[d(X_S,Z_S)] - \TV(P_{X_S, \hat X_S}, P_{X_S} \otimes P_{\hat X_S}) 
= \Expect[d(X_S,Z_S)] + o(1), 
\label{eq:ddTV}
\end{equation}
where $Z_S$ has the same distribution as $\hat X_S$ and is independent of $X_S$.
By \prettyref{lmm:randomguess} at the end of this subsection, we have
\begin{equation}
\Expect[d(X_S,Z_S)] \geq \pth{\frac{k-1}{k} - m^{-1/3}}(1-k! e^{-2m^{1/3}}).
\label{eq:randomguess2}
\end{equation}
Combining \prettyref{eq:davg2}, \prettyref{eq:ddTV} and \prettyref{eq:randomguess2}, sending $n\to\infty$ followed by $m \to \infty$, we arrive at
\[
\liminf_{n\diverge} \eexpect{ d(X,\hat X)} \geq \frac{k-1}{k}.
\]
This completes the proof of the proposition.
\end{proof}

\begin{lemma}
\label{lmm:randomguess}	
	Let $X$ be uniformly distributed on $[k]^m$ and $Z$ is independent of $X$ with an arbitrary distribution on $[k]^m$. 
For the loss function in \prettyref{eq:lossd}, we have\footnote{Note that for any fixed $k,m$ and any string $x,z\in [k]^m$, we can always outperform random matching, i.e., $d(x,z) < \frac{k-1}{k}$. The point of \prettyref{eq:randomguess} is that this improvement is negligible for large $m$.}
	\begin{equation}
	d(X,Z) \geq \frac{k-1}{k} - m^{-1/3}
	\label{eq:randomguess}
	\end{equation}
	with probability at least $1-(k! e^{-2m^{1/3}})$.
\end{lemma}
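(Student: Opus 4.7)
The plan is to condition on $Z$ (using independence of $X$ and $Z$), handle a single permutation with a Hoeffding bound, and union-bound over the $k!$ permutations. Since $X \sim \Unif([k]^m)$ is independent of $Z$, for any fixed realization $z \in [k]^m$ and any fixed permutation $\pi \in S_k$, the random variables $\{\indc{X_i \neq \pi(z_i)}\}_{i=1}^m$ are i.i.d.\ Bernoulli with mean $\frac{k-1}{k}$. Hence $\frac{1}{m}\sum_{i=1}^m \indc{X_i \neq \pi(z_i)}$ is a normalized sum of $m$ independent $[0,1]$-valued random variables with mean $\frac{k-1}{k}$.

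Applying Hoeffding's inequality in its one-sided form gives, for every $t>0$,
\[
\Prob\!\left[\frac{1}{m}\sum_{i=1}^m \indc{X_i \neq \pi(z_i)} < \frac{k-1}{k} - t\,\Big|\,Z=z\right] \leq e^{-2mt^2}.
\]
Choose $t = m^{-1/3}$, so $2mt^2 = 2m^{1/3}$, and take a union bound over all $\pi \in S_k$ (at most $k!$ of them):
\[
\Prob\!\left[\min_{\pi \in S_k} \frac{1}{m}\sum_{i=1}^m \indc{X_i \neq \pi(z_i)} < \frac{k-1}{k} - m^{-1/3}\,\Big|\,Z=z\right] \leq k!\, e^{-2m^{1/3}}.
\]
Since the right-hand side does not depend on $z$, we may integrate out $Z$ and conclude that the event $\{d(X,Z) \geq \frac{k-1}{k} - m^{-1/3}\}$ holds with probability at least $1 - k!\, e^{-2m^{1/3}}$, which is exactly \eqref{eq:randomguess}.

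The only step that requires any care is the observation that, although the \emph{minimum} over $\pi$ is what we want to lower-bound, it suffices to show that \emph{every} permutation produces at most a $t$-deviation below $\frac{k-1}{k}$; the union bound then converts the $k!$ one-sided tail bounds into a single bound on the minimum. There is no combinatorial subtlety since we allow ourselves the very generous $k!$ factor (which is absorbed by the doubly-exponentially-small $e^{-2m^{1/3}}$ for large $m$), so no obstacle is expected.
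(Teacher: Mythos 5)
Your proof is correct and follows essentially the same route as the paper's: apply Hoeffding's one-sided inequality to the normalized Hamming mismatch count for each fixed permutation, take a union bound over the $k!$ permutations, and set the deviation parameter to $m^{-1/3}$. The only cosmetic difference is that you condition on $Z=z$ explicitly before invoking Hoeffding, whereas the paper states directly that $d_H(X,\pi(Z))\sim\Binom(m,\tfrac{k-1}{k})$; both are equivalent by independence.
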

\begin{proof}
	For each fixed $\pi$, the Hamming distance $d_H(X,\pi(Z))\sim \Binom(m,\frac{k-1}{k})$. From Hoeffding's inequality we have
	$$ \PP[d_H(X,\pi(Z) < {k-1\over k} - \delta] \le e^{-2m \delta^2}\,,$$
	and from the union bound
	$$ \PP[\min_\pi d_H(X,\pi(Z) < {k-1\over k} - \delta] \le k! e^{-2m \delta^2}\,.$$
	Setting $\delta = m^{-1/3}$ completes the proof.
\end{proof}

\begin{remark} In the above proof we considered the problem of reconstructing the root $X_u$ variable of a Galton-Watson
tree with the average degree $d$, where the vertex variables are iid and unifrom on $[k]$, and the edge variables are given  $Y_{i,j}
= \indc{X_i=X_j}$ for each edge $i,j$. The reconstruction of $X_u$ is based on the values of all $Y_{i,j}$ and all vertex variables at an arbitrary deep layer of the tree. We have shown
the reconstruction is impossible (unable to outperform random guessing) if $d\le{k\over2}$. At the same time, clearly 
reconstruction is possible if $d\ge k$ (in which case there is an arbitrarily long path of edges with $Y_{i,j}=1$ starting
from the root). So what is the exact threshold? A work in progress~\cite{GuPo2019-draft} shows a much improved bound,
namely that reconstruction is impossible if
	$$ d < f(k) \eqdef \left( {\log k - \log (k-1)\over \log k} {k-1\over k} + {1\over k}\right)^{-1} = k - (1+o(1)){k\over \log k}\,.$$
Using this bound in place of $d<k/2$ it follows that correlated recovery in a $k$-SBM is not possible if 
\begin{equation}\label{eq:gupo_1}
		(\sqrt{a}-\sqrt{b})^2 < f(k)\,.
\end{equation}	
	This improves~\eqref{eq:banks} for all $k\geq3$ in some range of $a,b$. The work~\cite{GuPo2019-draft} presents
	further improvements to~\eqref{eq:gupo_1} based on applying SDPIs directly to an equivalent Potts model on a
	tree.
\end{remark}

\appendix
\section{Contraction coefficients of some binary-input channels}
	\label{app:eta}

Consider an arbitrary channel $P_{Y|X}$. Denote the contraction coefficient, defined as the best constant
in~\eqref{eq:sdpi}, by $\etaKL(P_{Y|X})$. It has an equivalent characterization:
 \begin{equation}\label{eq:etakl_d}
 	\etaKL(P_{Y|X}) = \sup_{\pi_X \neq \pi'_X} {D(Q_Y \| Q_{Y}')\over D(\pi_X \| \pi_X')}\,,
\end{equation} 
where $Q_Y$ and $Q'_Y$ are the distributions induced by $\pi_X$ and $\pi_X'$, respectively.

Consider a binary input channel $P_{Y|X}$, where $P_{Y|X=0}=P$ and $P_{Y|X=1}=Q$. 
Then we can write $\etaKL(P_{Y|X}) = \etaKL(P,Q)$, for convenience.
The following representation is given in \cite[Proof of Theorem 21]{PW15-sdpi-tutorial} in terms of the Le Cam divergence:
	\begin{equation}
	\etaKL(P,Q) = \sup_{\beta \in [0,1]} \underbrace{\beta\bar\beta  \int \frac{(P-Q)^2}{\beta P + \bar\beta Q}}_{\triangleq \LC_{\beta}(P\|Q)},
	\label{eq:etaLC}
	\end{equation}	
	where we denote $\bar\beta=1-\beta$. 
	For example, for a binary-input binary-output channel, direct calculation gives
	\begin{align}
	\etaKL(\Bern(p),\Bern(q)) 
	= &~ p+q-2pq -2\sqrt{p\bar p q \bar q} 	\label{eq:eta-binary} \\
	 \leq &~ (\sqrt{p}-\sqrt{q})^2 + 2\sqrt{pq}(p+q) 	\label{eq:eta-binary1x} 
	\end{align}
	In particular, for the $\BSC(\delta)$ we have $q=1-p=\delta$ and $\etaKL(\BSC(\delta))=(1-2\delta)^2$.

	It is further shown in \cite[Theorem 21]{PW15-sdpi-tutorial} that squared Hellinger distance determines the contraction coefficient of binary-input channel up to a factor of two:	
	\begin{equation}
\frac{H^2(P,Q)}{2} \leq \eta(\{P,Q\}) \leq H^2(P,Q).
	\label{eq:etaPQH}
\end{equation}
Thus, we have
\begin{align}
\etaKL(\Bern(a/n),\Bern(b/n)) \leq & ~ \frac{(\sqrt{a}-\sqrt{b})^2+o(1)}{n}, \quad n\diverge	\label{eq:eta-binary1}\\
\etaKL(N(-\delta,1), N(\delta,1)) \leq & ~ \delta^2(1+o(1)), \quad \delta \to 0. \label{eq:eta-awgn}
\end{align}

For binary-input channels, the SDPI constant can be related to the following $\chi^2$-mutual information:
\begin{equation}\label{eq:ichi2_def}
		I_{\chi^2}(X;Y) \triangleq \chi^2(P_{XY} \| P_X \otimes P_Y)
\end{equation}	
and notice that if $X\sim \Bern(1/2)$ then
	$$ I_{\chi^2}(X;Y) = \LC_{1/2}(P\|Q)=\int \frac{(P-Q)^2}{2(P+Q)}\,. $$
Hence from~\eqref{eq:etaLC} we have 
\begin{equation}\label{eq:etakl_generic}
		\etaKL(P_{Y|X}) \ge I_{\chi^2}(X;Y)\,.
\end{equation}	
Furthermore, under a symmetry assumption, \prettyref{eq:etakl_generic} holds with the equality as we show next.

A binary-input channel $P_{Y|X}$ is called symmetric (often called a BMS channel in the information theory literature \cite{richardson2008modern}) if there exists a measurable involution $T: \calY\to
\calY$ such that $P_{Y|X=0}(T^{-1} A) = P_{Y|X=1}(A)$ for all measurable subsets $A \subset \calY$. For such a channel,
we have that
\begin{equation}\label{eq:etakl_BMS}
		\eta_{KL}(P_{Y|X}) = I_{\chi^2}(X;Y)\,,\qquad X \sim \Bern(1/2)\,,
\end{equation}
Indeed, for the special case of $\BSC(\delta)$, both sides are equal to $(1-2\delta)^2$ by an explicit calculation. 
In general, a well-known decomposition result (cf.~\cite[Lemma 4.28]{richardson2008modern}) shows that any BMS $P_{Y|X}$ can be represented as a mixture of $\BSC$'s.
Namely, we can equivalently think of the action of the channel $P_{Y|X}$ as first generating a random variable $\Delta \in
[0,1]$ according to a fixed distribution $P_\Delta$, passing $X$ through $\BSC(\Delta)$ to obtain $\tilde Y$, and then outputting both $\Delta$ and $\tilde Y$. With this
model, we have $Y=(\Delta, \tilde Y)$ and with $X\sim \Bern(1/2)$ 
	$$ I_{\chi^2}(X; Y) = I_{\chi^2}(X; \Delta, \tilde Y) 
	= \EE[(1-2\Delta)^2]\,.$$
Next, fix two distributions $\pi=\Bern(a)$ and $\pi'=\Bern(a')$. Let $Q=Q_{\Delta, \tilde Y}$ and $Q'=Q_{\Delta, \tilde Y}'$
be the corresponding distributions produced at the output of $P_{Y|X}$. Note that conditioned on $\Delta=\delta$ we have
by the SDPI~\eqref{eq:etakl_d} for the $\BSC(\delta)$:
$$ D(Q_{\tilde Y|\Delta = \delta}\| Q'_{\tilde Y|\Delta=\delta}) \le (1-2\delta)^2 D(\pi \| \pi')\,.$$
Since the marginal distribution of $\Delta$ is the same under $Q$ and $Q'$, taking expectation over $\Delta$ yields
$$ D(Q\|Q') = \EE_{\Delta \sim P_{\Delta}}[D(Q_{\tilde Y|\Delta}\| Q'_{\tilde Y|\Delta})] \le \EE[(1-2\Delta)^2] D(\pi
\| \pi')\,.$$
Therefore, from~\eqref{eq:etakl_d} we get that
$$ \etaKL(P_{Y|X}) \le \EE[(1-2\Delta^2)] = I_{\chi^2}(X; Y)\,.$$
Together with~\eqref{eq:etakl_generic} this completes the proof of~\eqref{eq:etakl_BMS}.

\section{Comparison with~\cite{abbe2018information}}\label{sec:ab18}

The first bound on $\mathbb{Z}_2$-synchronization threshold over a 2D-square grid was obtained in~\cite{abbe2017group}
by leveraging a standard coupling technique, in which the action of the $\BSC(\delta)$ is modeled as passing a bit
uncorrupted with probability $1-2\delta$ or rerandomizing it otherwise. A natural argument
then shows that on an arbitrary lattice the $\mathbb{Z}_2$-synchronization is impossible
whenever $(1-2\delta)$ is smaller than the bond-percolation threshold of the lattice. 

The present work sprang from the remark of E.~Abbe~\cite{Abbe2018-april}, suggesting that an improved estimate on this threshold is 
possible. In the previous work~\cite{PW15-sdpi-tutorial} of the authors, a general technique is developed for showing vanishing of
mutual information in a network of $\BSC(\delta)$-channels whenever $(1-2\delta)^2$ is below the 
vertex-percolation threshold. While the Bayesian network
setup of~\cite{PW15-sdpi-tutorial} is not directly applicable to the setting of group synchronization, the 
method (of induction on the number of edges) does apply. This lead us to Theorem~\ref{thm:perc}, which was disseminated slightly
prior to the talk~\cite{AB18-gsync} presenting a similar result (subsequently published as~\cite{abbe2018information}).
Both our Theorem~\ref{thm:perc} and~\cite{abbe2018information} yield the same threshold for
$\mathbb{Z}_2$-synchronization on a 2D-square grid, cf. Corollary~\ref{cor:fano}.

The main result of~\cite{abbe2018information} is the following. Consider the setting of Theorem~\ref{thm:gperc} and
assume in addition 
\begin{enumerate}
	\item that each label $X_v$ is binary and unbiased: $X_v \sim \Bern(1/2)$;
	\item that each $w\in W$ has degree 2;
	\item that each channel $P_{Y_w|X_{N(w)}}$ has the following special form
		$$ Y_w \sim Q_w(\cdot | X_u \oplus X_v)\,,$$
		where $N(w) = \{u,v\}$ and $Q_w(\cdot|\cdot)$ is a binary-input symmetric channel (BMS).
\end{enumerate}
	Then
	$$ I_{\chi^2}(X_u; X_S, Y_W) \le \perc_G(v, S)\,,$$
	where $I_{\chi^2}$ was defined in~\eqref{eq:ichi2_def} and $\perc_G(v,S)$ is a probability of existence of an open path from $u$ to $S$ if each vertex $w\in W$
	is retained with probability $I_{\chi^2}(X_{N(w)}; Y_w)$.

	In view of \eqref{eq:etakl_BMS} and the bound $I(X_u; X_S, Y_W) \le \log (1+I_{\chi^2}(X_u; X_S, Y_W)) \le \log e \cdot
	I_{\chi^2}(X_u; X_S, Y_W)$,  we see that indeed the result
	of~\cite{abbe2018information} is a special case of Theorem~\ref{thm:gperc}.

Notably the proof in~\cite{abbe2018information} also proceeds by induction on the number of edges (i.e.~on the size of
$W$), similar to our
proofs of Theorems~\ref{thm:perc}-\ref{thm:gperc} and~\cite[Theorem 5]{PW15-sdpi-tutorial}. Indeed, suppose that the
result has been shown for $W'$ and $W=W' \cup\{w_0\}$. Suppose also $N(w_0) = \{u_0,v_0\}$, and in addition that
$Q_{w_0} = \BSC(\delta_0)$ (this assumption is easy to remove by a separate argument). Then \cite{abbe2018information}
exploits the extra structure imposed by the assumptions above and directly computes 
\begin{equation}\label{eq:ab18_st}
		I_{\chi^2}(X_u; X_v, Y_{W}) = I_0 + (I_1 - I_0) (1-2\delta_0)^2 h((1-2\delta_0))^2\,,
\end{equation}	
	where $h:[0,1]\to [0,1]$ is a non-decreasing function that depends only on $W'$ and $\{Q_w, w\in W'\}$, $I_0 = I_{\chi^2}(X_u; X_v, Y_{W'})$ and $I_1 = I_{\chi^2}(X_u; X_v, Y_{W'}, \tilde Y_{w_0})$, with
	$\tilde Y_{w_0} = X_{u_0} \oplus X_{v_0}$ denoting the noiseless observation. It is then easy to see
	that~\eqref{eq:ab18_st} grows slower (in terms of $(1-2\delta_0)^2$) than the percolation probability.

\section*{Acknowledgement}
	\label{sec:ack}
	Y.~Wu is grateful to Jiaming Xu for discussions pertaining to \prettyref{rmk:generalchannel}.
	Y.~Polyanskiy thanks Emmanuel Abbe for introducing him to and sharing his results on the group synchronization
	over a 2D grid.


\newcommand{\etalchar}[1]{$^{#1}$}

\end{document}